\let\MYoriglatexcaption\caption
\renewcommand{\caption}[2][\relax]{\MYoriglatexcaption[#2]{#2}}
\newcommand{\removelatexerror}{\let\@latex@error\@gobble}
\renewcommand{\vec}[1]{\boldsymbol{#1}}
\DeclareMathOperator*{\argmin}{argmin}
\DeclareMathOperator*{\diag}{diag}
\newtheorem{theorem}{\textbf{Theorem}}
\newtheorem{lemma}{\textbf{Lemma}}
\def\BibTeX{{\rm B\kern-.05em{\sc i\kern-.025em b}\kern-.08em
    T\kern-.1667em\lower.7ex\hbox{E}\kern-.125emX}}
\begin{document}
\title{Placement is not Enough: Embedding with Proactive Stream Mapping on the Heterogenous Edge
}

\author{
    \IEEEauthorblockN{
        Hailiang Zhao\IEEEauthorrefmark{2},
        Shuiguang Deng\IEEEauthorrefmark{2},
        Zijie Liu\IEEEauthorrefmark{2},
        Zhengzhe Xiang\IEEEauthorrefmark{3}, and
        Jianwei Yin\IEEEauthorrefmark{2}
    }
    \IEEEauthorblockA{\IEEEauthorrefmark{2}College of Computer Science and Technology, Zhejiang University}
    \IEEEauthorblockA{\IEEEauthorrefmark{3}College of Computer Science and Technology, Zhejiang University City College}
}

\maketitle

\begin{abstract}
    Edge computing is naturally suited to the applications generated by Internet of Things (IoT) nodes. The IoT applications 
    generally take the form of directed acyclic graphs (DAGs), where vertices represent interdependent functions and edges 
    represent data streams. The status quo of minimizing the makespan of the DAG motivates the study on optimal function 
    placement. However, current approaches lose sight of proactively mapping the data streams to the physical links between 
    the heterogenous edge servers, which could affect the makespan of DAGs significantly. To solve this problem, we study both 
    function placement and stream mapping with data splitting simultaneously, and propose the algorithm DPE (Dynamic Programming-based 
    Embedding). DPE is theoretically verified to achieve the global optimality of the embedding problem. The complexity analysis 
    is also provided. Extensive experiments on Alibaba cluster trace dataset show that DPE significantly outperforms two state-of-the-art 
    joint function placement and task scheduling algorithms in makespan by $43.19$\% and $40.71$\%, respectively.
\end{abstract}


\section{Introduction}\label{s1}
Nowadays, widely used stream processing platforms, such as Apache Spark, Apache Flink, Amazon Kinesis Streams, etc., 
are designed for large-scale data-centers. These platforms are not suitable for real-time and latency-critical applications 
running on widely spread Internet of Things (IoT) nodes. By contrast, near-data processing within the network edge 
is a more applicable way to gain insights, which leads to the birth of edge computing. However, state-of-the-art edgy 
stream processing systems, for example, Amazon Greengrass, Microsoft Azure IoT Edge and so on, do not consider that how the dependent 
functions of the IoT applications distributed to the resource-constrained edge. To address this limitation, works studying 
\textit{function placement} across the distributed edge servers spring up \cite{placement1,placement2,placement3}. In these 
works, the IoT application is structured as a service function chain (SFC) or a directed acyclic graph (DAG) composed of 
interdependent functions, and the placement strategy of each function is obtained by minimizing the makespan of the 
application, under the trade-off between node processing time and cross-node communication overhead. 

However, when minimizing the makespan of the application, state-of-the-art approaches only optimize the placement of functions, 
while passively generate the \textit{stream mapping}. Here the stream mapping refers to mapping the input/output streams to the 
physical links between edge servers. The passivity here means the routing path between each placed function is not optimized but 
generated automatically through SDN controllers. Nevertheless, for the heterogenous edge, where each edge server has different 
processing power and each link has various throughput, better utilization of the stream mapping can result in less makespan even 
though the corresponding function placement is worse. This phenomenon is illustrated in Fig. \ref{fig1}. The top half of this 
figure is an undirected connected graph of six edge servers, abstracted from the physical infrastructure of the heterogenous 
edge. The numbers tagged in each node and beside each link of the undirected graph are the processing power (measured in flop/s) 
and throughput (measured in bit/s), respectively. The bottom half is a SFC with three functions. The number tagged inside 
each function is the required processing power (measured in flops). The number tagged beside each data stream is the size of it 
(measured in bits). Fig. \ref{fig1} demonstrates two solutions of function placement. The numbers tagged beside nodes and links 
of each solution are the time consumed (measured in second). Just in terms of function placement, solution 1 enjoys lower function 
processing time ($2.5$s $<$ $4$s), thus performs better. However, the makespan of solution 2 is $1.5$s lesser than solution 
1 because the path which solution 2 routes through possess a higher throughput. 

\begin{figure}[htbp]
    \centerline{\includegraphics[width=2.2in]{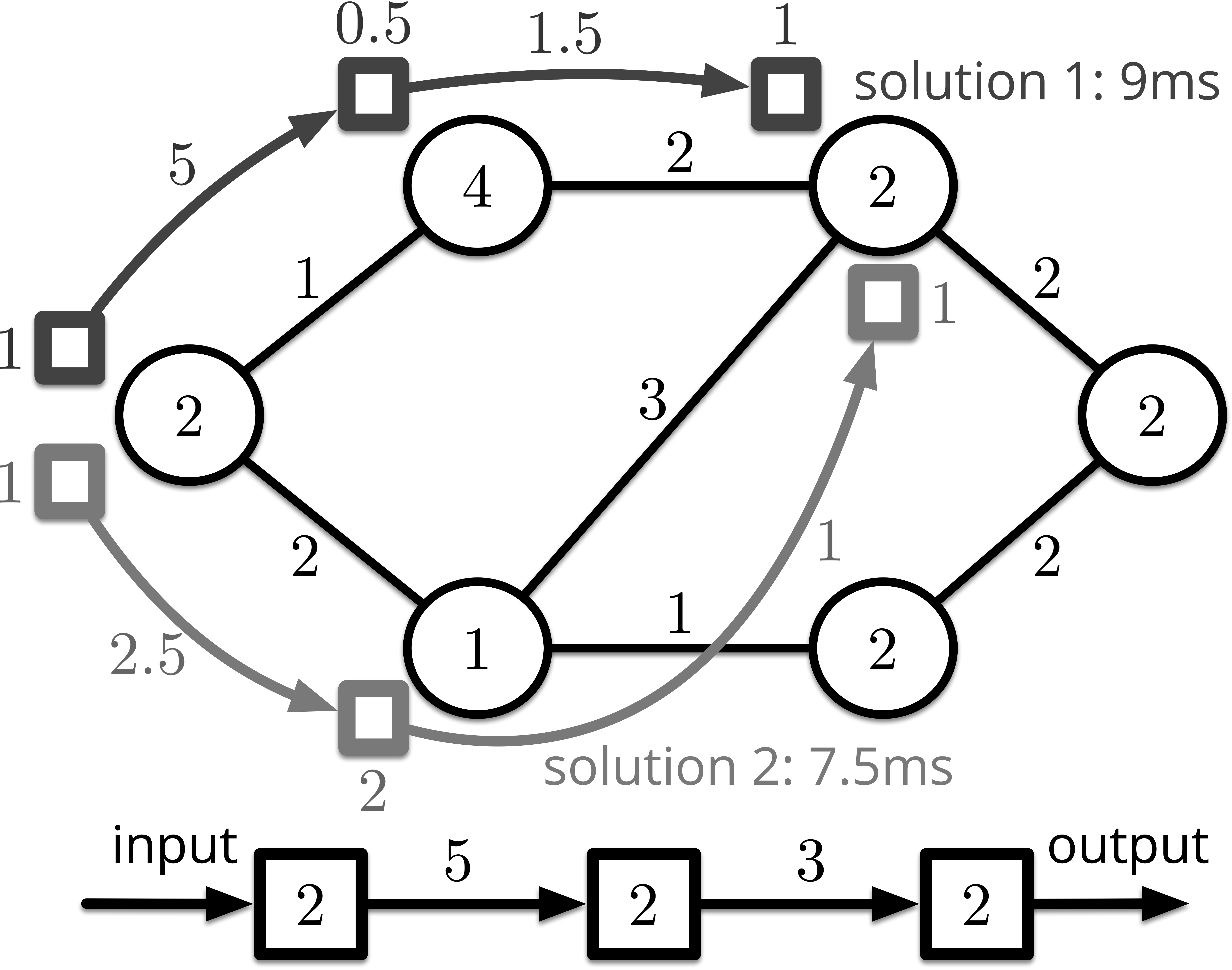}}
    \caption{Two function placement solutions with different stream mappings.}
    \label{fig1}
\end{figure}

The above example implies that different stream mappings can significantly affect the makespan of the IoT applications. It enlightenes 
us to take the stream mapping into consideration \textit{proactively}. In this paper, we name the combination of function placement 
and stream mapping as \textit{function embedding}. Moreover, if stream splitting is allowed, i.e., the output data stream of a function 
can be splitted and route on multiple paths, the makespan can be decreased further. This phenomenon is captured in Fig. \ref{fig2}. The 
structure of this figure is the same as Fig. \ref{fig1}. It demonstrates two function embedding solutions with stream splitting allowed 
or not, respectively. In  solution 2, the output stream of the first function is divided into two parts, with $2$bits 
and $3$bits, respectively. Correspondingly, the time consumed on routing are $3$s and $2.5$s, respectively. Although the two solutions 
have the same function placement, the makespan of solution 2, which is calculated as $1 + \max\{3, 2.5\} + 1 + 1.5 + 1 = 7.5$s, is $4.5$s 
lesser than  solution 1. In practice, segment routing (SR) can be applied to split and route data stream to different edge servers 
by commercial DNS controllers and HTTP proxies \cite{SR}.

\begin{figure}[htbp]
    \centerline{\includegraphics[width=2.2in]{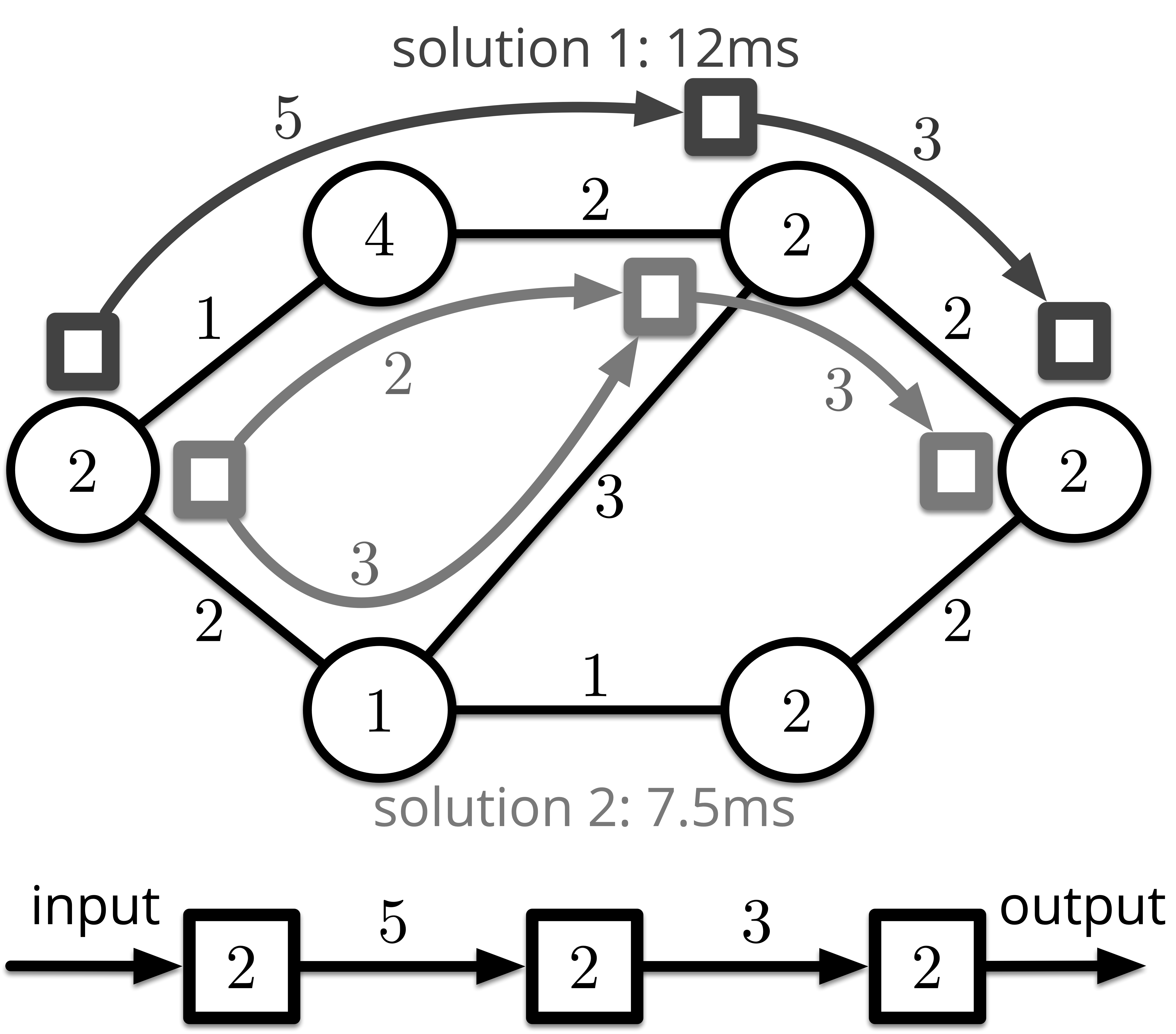}}
    \caption{Two function placement solutions with stream splitting allowed or not, respectively.}
    \label{fig2}
\end{figure}

To capture the importance of stream mapping and embrace the future of 5G communications, in this paper we study the \textit{dependent 
function embedding} problem with \textit{stream splitting} on the heterogenous edge. The problem is similar with 
Virtual Network Embedding (VNE) problem in 5G network slicing for virtualized network functions (VNFs) \cite{slicing-algorithm-analysis}. 
The difference lies in that the VNFs are user planes, management planes and control planes with different levels of granularities, which are virtualized for 
transport and computing resources of telecommunication networks \cite{VNE-survey}, rather than the IoT applications this paper 
refers to. For a DAG with complicated structure, the problem is combinatorial and difficult to solve when it scales up. 
In this paper, we firstly find the optimal substructure of the problem. The basic framework of our algorithm is based on Dynamic Programming. 
For each substructure, we seperate several linear programming sub-problems and solve them optimally. We do not adopt the regular iteration-based 
solvers such as simplex method and dual simplex method, but derive the optimal results directly. 
At length, our paper make the following contributions: 
\begin{itemize}
    \item \textbf{Model contribution:} We study the dependent function embedding problem on the heterogenous edge. Other than existing 
    works where only function placement is studied, we novelly take proactive stream mapping and data splitting into consideration 
    and leverage dynamic programming as the approach to embed DAGs of IoT applications onto the constrained edge.
    \item \textbf{Algorithm contribution:} We present an algorithm that solves the dependent function embedding problem optimally. We firstly 
    find the optimal substructure of the problem. In each substructure, when the placement of each function is fixed, we derive the paths 
    and the data size routes through each path optimally. 
    \item \textbf{Experiment contribution:} We conduct extensive simulations on a cluster trace with 20365 unique DAGs from Alibaba \cite{alibaba}. 
    Experiment results show that our algorithm significantly outperforms two algorithm, FixDoc \cite{placement1} and HEFT \cite{HEFT}, on the average 
    completion time by $43.19$\% and $40.71$\%, respectively.
\end{itemize}

The remainder of the paper is organized as follows. In Sec. \ref{s2}, we present the system model and formulate the problem. In Sec. \ref{s3}, 
we present the proposed algorithms. Performance guarantee and complexity analysis are provided in Sec. \ref{s4}. 
The experiment results are demonstrated in Sec. \ref{s5}. In Sec. \ref{s6}, we review related works on functions placement on the heterogenous edge. 
Sec. \ref{s7} concludes this paper. 

\section{System Model}\label{s2}
Let us formulate the heterogenous edge as an undirected connected graph $\mathcal{G} \triangleq (\mathcal{N}, \mathcal{L})$, where 
$\mathcal{N} \triangleq \{n_1, ..., n_N\}$ is the set of edge servers and $\mathcal{L} \triangleq \{l_1, ..., l_L\}$ is the set of links. 
Each edge server $n \in \mathcal{N}$ has a processing power $\psi_n$, measured in $\textrm{flop/s}$ while each link $ l \in \mathcal{L}$ 
has the same uplink and downlink throughput $b_l$, measured in bit/s. 

\subsection{Application as a DAG}
The IoT application with interdependent functions is modeled as a DAG. The DAG can have arbitrary shape, not just linear SFC. In 
addition, multi-entry multi-exit is allowed. We write $(\mathcal{F}, \mathcal{E})$ for the DAG, where 
$\mathcal{F} \triangleq \{ f_1, ..., f_Q \}$ is the set of $Q$ interdependent functions listed \textit{in topological order}. 
$\forall f_i, f_j \in \mathcal{F}, i \neq j$, if the output stream of $f_i$ is the input of its downstream function $f_j$, a directed link $e_{ij}$ 
exists. $\mathcal{E} \triangleq \{ e_{ij} | \forall f_i, f_j \in \mathcal{F}\}$ is the set of all directed links. For each function 
$f_i \in \mathcal{F}$, we write $c_i$ for the required number of floating point operations of it. For each directed link $e_{ij} \in \mathcal{E}$, 
the data stream size is denoted as $s_{ij}$ (measured in bits). 

\subsection{Dependent Function Embedding}

We write $p(f_i) \in \mathcal{N}$ for the chosen edge server which $f_i$ to be placed on, and $\mathcal{P}(e_{ij})$ for the set of 
paths from $p(f_i)$ to $p(f_j)$. Obviously, for all path $\varrho \in \mathcal{P}(e_{ij})$, it consists of links from $\mathcal{L}$ without 
duplication. For a function pair $(f_i, f_j)$ and its associated directed link $e_{ij} \in \mathcal{E}$, the data 
stream can be splitted and route through different paths from $\mathcal{P}(e_{ij})$. $\forall \varrho \in \mathcal{P}(e_{ij})$, let us 
use $z_{\varrho}$ to represent the allocated data stream size for path $\varrho$. Then, $\forall e_{ij} \in \mathcal{E}$, we have the following constraint:
\begin{equation}
    \sum_{\varrho \in \mathcal{P}(e_{ij})} z_{\varrho} = s_{ij}.
    \label{cons1}
\end{equation}
Notice that if $p(f_i) = p(f_j)$, i.e., $f_i$ and $f_j$ are placed on the same edge server, then $\mathcal{P}(e_{ij}) = \varnothing$ and the routing 
time is zero. Fig. \ref{fig3} gives a working example. The connected graph in it has four edge servers and five links, from $l_1$ to $l_5$. 
The two squares represents the source function $f_i$ and the destination function $f_j$. From the edge server 
$p(f_i)$ to the edge server $p(f_j)$, $e_{ij}$ routes through three paths with data size of $3$ bits, $2$ bits, and $1$ bits, respectively. In this 
example, $s_{ij} = 6$. On closer observation, we can find that two data streams route through $l_1$. Each of them is from path 
$\varrho_1$ and $\varrho_2$ with $3$ bits and $2$ bits, respectively. 
\begin{figure}[htbp]
    \centerline{\includegraphics[width=2.4in]{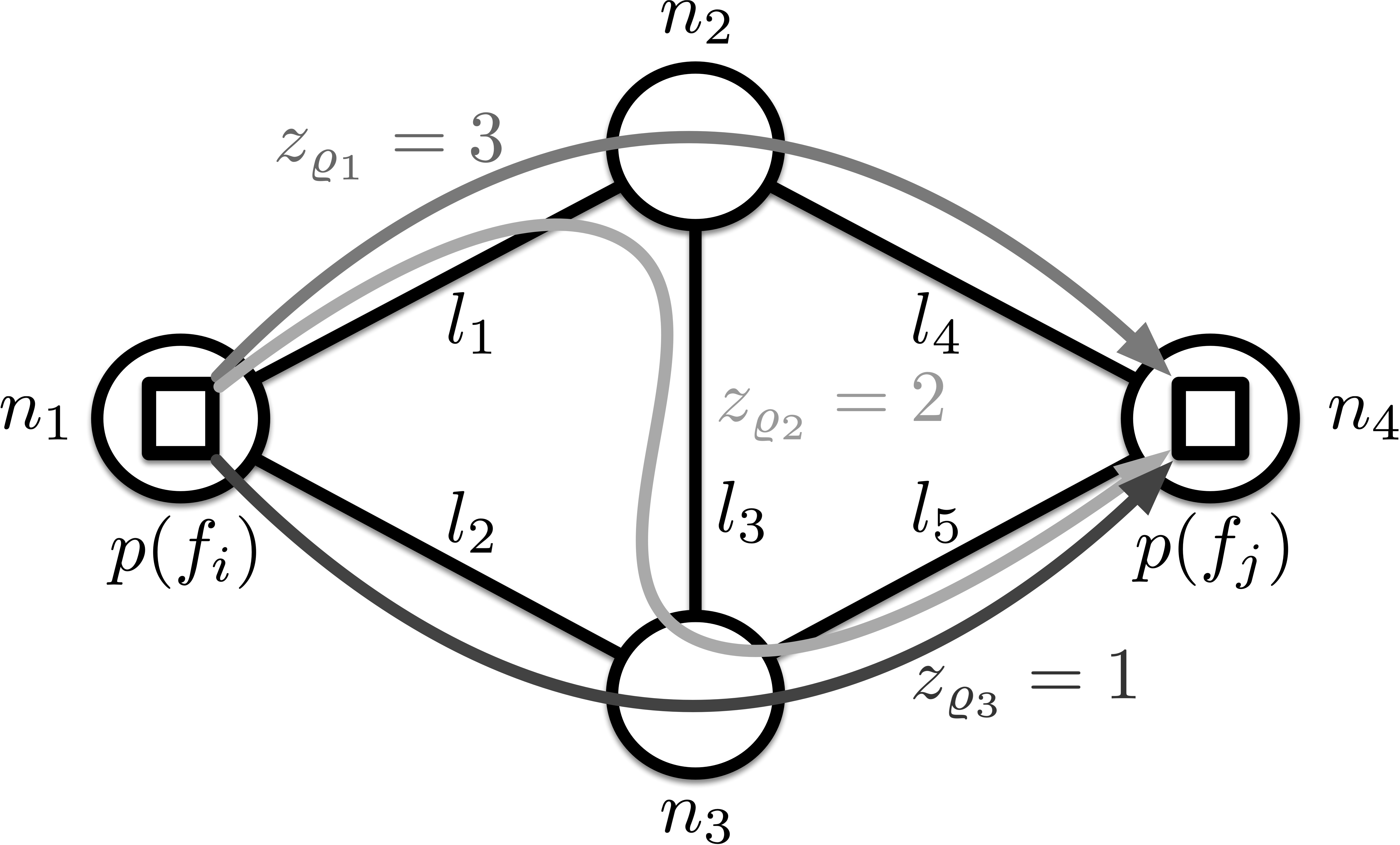}}
    \caption{A working example for stream splitting.}
    \label{fig3}
\end{figure}

\subsection{Involution Function of Finish Time}
Let us use $T \big( p(f_i) \big)$ to denote the finish time of $f_i$ on edge server $p(f_i)$. Considering that the functions of the DAG have 
interdependent relations, for each function pair $(f_i, f_j)$ where $e_{ij}$ is defined, $T \big( p(f_j) \big)$ should involve according to
\begin{equation}
    T \big( p(f_j) \big) = \max_{\forall i: e_{ij} \in \mathcal{E}} \Big( T \big( p(f_i) \big) + t(e_{ij}) \Big) + t \big( p(f_j) \big), 
    \label{involution}
\end{equation}
where $t(e_{ij})$ is the routing time of the directed link $e_{ij}$ and $t \big( p(f_j) \big)$ is the processing time of $f_j$ on edge server $p(f_j)$. 
Corresponding to \eqref{involution}, for each entry function $f_i$, 
\begin{equation}
    T\big( p(f_i) \big) = t \big( p(f_i) \big) + rt_{p(f_i)},
    \label{source}
\end{equation}
where $rt_{p(f_i)}$ stores the beginning time for processing $f_i$. If $f_i$ is the first function which is 
scheduled on server $p(f_i)$, $rt_{p(f_i)}$ is zero. Otherwise, it is the finish time of the last function which is scheduled on server $p(f_i)$. 

$\forall e_{ij} \in \mathcal{E}$, $t(e_{ij})$ is decided by the paths in $\mathcal{P}(e_{ij})$. Specifically, when $f_i$ and $f_j$ are not placed 
on the same edge server, 
\begin{equation}
    t(e_{ij}) = \max_{\varrho \in \mathcal{P}(e_{ij})} \sum_{l \in \varrho} \frac{z_{\varrho}}{b_l},
    \label{link_time}
\end{equation}
where $\mathds{1} \{\cdot\}$ is the indicator function. \eqref{link_time} means that 
the routing time is decided by the slowest branch of data streams. Solution 2 in Fig. \ref{fig2} is an example. 

$\forall f_j \in \mathcal{F}$, $t \big( p(f_j) \big)$ is decided by the processing power of the chosen edge server $p(f_j)$:
\begin{equation}
    t \big( p(f_j) \big) = \frac{c_j}{\psi_{p(f_j)}}.
\end{equation}

To describe the earliest makespan of the DAG, inspired by \cite{placement1}, we add a dummy tail function $f_{Q+1}$. As a dummy function, the processing 
time of $f_{Q+1}$ is set as zero whichever edge server it is placed on. That is, 
\begin{equation}
    t \big( p(f_{Q+1}) \big) = 0, \forall p(f_{Q+1}) \in \mathcal{N}. 
\end{equation}
Besides, each destination function needs to point to $f_{Q+1}$ with a link weight of its output data stream size. 
We write $\mathcal{F}_{dst} \subset \mathcal{F}$ for the set of destination functions and $\mathcal{F}_{src}$ for the set of entry functions of the DAG. 
In addition, we write $\mathcal{E}_{dmy}$ for the set of new added links which point to $f_{Q+1}$, i.e., 
\begin{equation}
    \mathcal{E}_{dmy} \triangleq \{ e_{i,Q+1} | \forall f_i \in \mathcal{F}_{dst} \}.
\end{equation}
As such, the DAG is updated as $(\mathcal{F}', \mathcal{E}')$, where $\mathcal{F}' \triangleq \mathcal{F} \cup \{f_{Q+1}\}$ and 
$\mathcal{E}' \triangleq \mathcal{E} \cup \mathcal{E}_{dmy}$.

\subsection{Problem Formulation}
Our target is to minimize the makespan of the DAG by finding the optimal $p(f_i)$, $\mathcal{P}(e_{ij})$, and $z_{\varrho}$ for all $f_i \in \mathcal{F}'$, 
$e_{ij} \in \mathcal{E}'$, and $\varrho \in \mathcal{P}(e_{ij})$. Let us use $T^\star \big( p(f_i) \big)$ to represent the earliest finish time of $f_i$ 
on edge server $p(f_i)$. Further, we use $ T\big(\mathcal{F}', \mathcal{E}', p(f_{Q+1})\big)$ to represent the earliest makespan of the DAG when $f_{Q+1}$ is 
placed on $p(f_{Q+1})$. Obviously, it is equal to $T^\star \big(p(f_{Q+1}) \big)$. The dependent function embedding problem can be formulated as:
\begin{eqnarray}
    \mathbf{P}: \min_{\forall p(f_i), \forall \mathcal{P}(e_{ij}), \forall z_{\varrho}} T \Big(\mathcal{F}', \mathcal{E}', p(f_{Q+1}) \Big) \nonumber \\
    s.t. \qquad \qquad \qquad \eqref{cons1}, \qquad \qquad \qquad\nonumber \\
    \quad z_{\varrho} \geq 0, \forall \varrho \in \mathcal{P}(e_{ij}), \forall e_{ij} \in \mathcal{E}'. \label{cons2}
\end{eqnarray}

\section{Algorithm Design}\label{s3}
\subsection{Finding Optimal Substructure}
The set of function embedding problems are proved to be NP-hard \cite{slicing-algorithm-analysis}. As a special case of these problems, 
$\mathbf{P}$ is NP-hard, too. because of the dependency relations of fore-and-aft functions, the optimal placement of functions and 
optimal mapping of data streams cannot be obtained \textit{simultaneously}. Nevertheless, we can solve it by finding its optimal substructure. 

Let us dig deeper into the involution equation \eqref{involution}. We can find that 
\begin{equation}
    T \big( p(f_j) \big) = \max_{\forall i: e_{ij} \in \mathcal{E}'} \Big( T \big( p(f_i) \big) + t(e_{ij}) + t \big( p(f_j) \big) \Big)
\end{equation}
because $t \big( p(f_j) \big)$ has no impact on $\max_{\forall i: e_{ij} \in \mathcal{E}'} (\cdot)$. Notice that $\mathcal{E}$ is replaced by $\mathcal{E}'$. 
Then the following expression holds:
\begin{eqnarray}
    \quad T^\star \big( p(f_j) \big) = \max_{\forall i: e_{ij} \in \mathcal{E}'} \Big\{ \min_{p(f_i), \mathcal{P}(e_{ij}), z_{\varrho}} \nonumber \\
    \quad \Big( T^\star \big( p(f_i) \big) + t(e_{ij}) + t \big( p(f_j) \big) \Big) \Big\}.
    \label{optimal_substructure}
\end{eqnarray}
Besides, for all the entry functions $f_i \in \mathcal{F}_{src}$, $T^\star\big( p(f_i) \big)$ is calculated by \eqref{source} without change.

With \eqref{optimal_substructure}, for each function pair $(f_i, f_j)$ where $e_{ij}$ exists, 
we define the sub-problem $\mathbf{P}_{sub}$: 
\begin{eqnarray*}
    \mathbf{P}_{sub}: \min_{p(f_i), \mathcal{P}(e_{ij}), z_{\varrho}} \mathbf{\Phi}_{ij} \triangleq T^\star \big( p(f_i) \big) + t(e_{ij}) + t \big( p(f_j) \big) \\
    s.t. \quad \eqref{cons1}, \eqref{cons2}. \qquad \qquad \qquad \qquad \qquad
\end{eqnarray*}
In $\mathbf{P}_{sub}$, $p(f_j)$ is fixed. We need to decide where $f_i$ shoud be placed and how $e_{ij}$ is mapped. Based on that, 
$T^\star \big( p(f_j) \big)$ can be updated by $\max_{\forall i: e_{ij} \in \mathcal{E}'} \min_{p(f_i), \mathcal{P}(e_{ij}), z_{\varrho}} \mathbf{\Phi}_{ij}$. 
As a result, $\mathbf{P}$ can be solved \textit{optimally} by calculating the earliest finish time of each function in turn. 

The analysis is captured in \textbf{Algorithm 1}, i.e. Dynamic Programming-based Embedding (DPE). 
Firstly, DPE finds all the \textit{simple paths} (paths without loops) between any two edge servers $n_i$ and $n_j$, i.e. $\mathcal{P}(e_{ij})$ 
(Step 2 $\sim$ Step 6). It will be used to calculate the optimal $z_\varrho^\star$ and $\mathcal{P}^\star(e_{ij})$. For all the functions $f_i$ 
who directly point to $f_j$, DPE firstly fixs the placement of $f_j \in \mathcal{N}$. Then, for the function pair $(f_i, f_j)$, $\mathbf{P}_{sub}$ 
is solved and the optimal solution is stored in $\mathbf{\Theta}$ (Step 18 $\sim$ Step 19). If $p^\star(f_i)$ has been decided beforehand, 
$\mathbf{\Phi}^\star_{ij}$ can be directly obtained by finding $\mathcal{P}^\star(e_{ij})$ and $z^\star_{\varrho}$ between $p^\star(f_i)$ and $n$. 
The optimal transmission cost is stored in $t^\star(e_{ij})$ (Step 11 $\sim$ Step 14). The if statement holds for $f_i \in \mathcal{F}'$ when $f_i$ 
is a predecessor of multiple functions. Step 12 is actually a sub-problem of $\mathbf{P}_{sub}$ 
with $p^\star(f_i)$ fixed. Notice that the calculation of the finish time of the entry functions and the other functions are different (Step 16 and 
Step 21, respectively). At last, the global minimal makespan of the DAG is $\min_{p_(f_{Q+1})} T^\star(p(f_{Q+1}))$. The optimal embedding of 
each function can be retrieved from $\mathbf{\Theta}$. 

\begin{figure}[!h]
    \removelatexerror
    \begin{algorithm}[H]
        \caption{DP-based Embedding (DPE)}
        \KwIn{$\mathcal{G}$ and $(\mathcal{F}', \mathcal{E}')$}
        \KwOut{Optimal value and corresponding solution}
        $\mathbf{\Theta} \leftarrow \varnothing$ \\
        \For{each $n_i \in \mathcal{M}$}
        {
            \For{each $n_j \in \mathcal{M} \wedge n_j \neq n_i$}
            {
                Find all the simple paths between $n_i$ and $n_j$ \\
            }
        }
        \For{$j = |\mathcal{F}_{src}| + 1$ to $Q+1$}
        {
            \For{each $n \in \mathcal{N}$}
            {
                $p(f_j) \leftarrow n$ \tcp*[f]{Fix the placement of $f_j$}\\
                \For{each $f_i \in \{ f_i | e_{ij}\textrm{ exists} \}$}
                {
                    \If{$p^\star(f_i)$ has been decided}
                    {
                        $\mathbf{\Phi}^\star_{ij} \leftarrow T^\star \big( p^\star(f_i) \big) + t^\star(e_{ij}) + t \big( p(f_j) \big)$ \\
                        \textbf{goto} Step 19 \\
                    }
                    \If{$f_i \in \mathcal{F}_{src}$}
                    {
                        $\forall p(f_i) \in \mathcal{N}$, update $T^\star \big( p (f_i) \big)$ by \eqref{source}\\
                    }
                    Obtain the optimal $\mathbf{\Phi}^\star_{ij}$, $p^\star(f_i)$, $\mathcal{P}^\star(e_{ij})$, and $z^\star_\varrho$ by 
                    solving $\mathbf{P}_{sub}$ \\
                    $\vec{\Theta}.append \Big(
                        \big\{p^\star(f_i)\big\} \times 
                        \big\{\mathcal{P}^\star(e_{ij})\big\} \times 
                        \big\{z^\star_{\varrho} | \forall \varrho \in \mathcal{P}^\star(e_{ij})\big\}
                        \Big)$ \\
                }
                Update $T^\star \big( p(f_j) \big)$ by \eqref{optimal_substructure} \\
            }
        }
        \Return{$\min_{p_(f_{Q+1})} T^\star(p(f_{Q+1}))$ and $\mathbf{\Theta}$}
    \end{algorithm}
\end{figure}

\subsection{Optimal Proactive Stream Mapping}
Now the problem lies in that how to find all the simple paths and solve $\mathbf{P}_{sub}$ optimally. 
For the former, we propose a Recursion-based Path Finding (RPF) algorithm, which will be detailed in Sec. \ref{3.B.1}. For the latter, 
When $p(f_j)$ is fixed, the value of $t \big( p(f_j) \big)$ is known 
and can be viewed as a constant for $\mathbf{P}_{sub}$. Besides, from Step 12 of DPE we can find that, for all 
$f_i \in \{f_i \in \mathcal{F}'-\mathcal{F}_{src} | e_{ij}\textrm{ exists}\}$, $\forall p(f_i) \in \mathcal{N}$, $T^\star \big( p (f_i) \big)$ 
is already updated in the last iteration. Thus, for solving $\mathbf{P}_{sub}$, the difficulty lies in that how to select the optimal placement 
of $f_i$ and the optimal mapping of $e_{ij}$. It will be detailed in Sec. \ref{3.B.2}. 

\subsubsection{\textbf{Recursion-based Path Finding}} \label{3.B.1}
We use $\mathcal{P}(n_i, n_j, \mathcal{M})$ to represent the set of simple paths from $n_i$ to $n_j$ where no path goes through nodes from the set 
$\mathcal{M} \subseteq \mathcal{N}$. The set of simple paths from $n_i$ to $n_j$ we want, i.e. $\mathcal{P} (e_{ij})$, is equal to 
$\mathcal{P}(n_i, n_j, \varnothing)$. $\forall n \in \mathcal{N}$, let us use $\mathcal{A}(n)$ to represent the set of edge servers adjcent to $n$. 
Then, $\mathcal{P}(n_i, n_j, \mathcal{M})$ can be calculated by the following recursion formula: 
\begin{equation*}
    \mathcal{P} (n_i, n_j, \mathcal{M}) = 
    \Big\{ J(\varrho, n_i) \Big| \bigcup_{m \in \mathcal{S}} \mathcal{P} \big(m, n_j, \mathcal{M} \cup \{n_i\} \big) \Big\}, 
\end{equation*}
where $\mathcal{S} \triangleq \mathcal{A}(n_i) - \mathcal{M} \cup \{n_i\}$. $J(\varrho, n_i)$ is a function that joins the node $n_i$ 
to the path $\varrho$ and returns the new joint path $\varrho \cup \{n_i\}$. The analysis in this paragraph is summarized in \textbf{Algorithm 2}, 
i.e. Recursion-based Path Finding (RPF) algorithm. 

Before calling RPF, we need to initialize the global variables. Specifically, $\mathbf{\Omega}$ stores all the simple paths, which 
is initialized as $\varnothing$. $\mathcal{V}$, as the set of visited nodes, is initialized as $\varnothing$. $\varrho$ is allocated for 
the storage of current path, which is also initialized as $\varnothing$. Whereafter, by calling $\texttt{RPF}(n_i, n_i, n_j)$, all the simple 
paths between $n_i$ and $n_j$ are stored in $\mathbf{\Omega}$. $\texttt{RPF}(n_i , n_i, n_j)$ is used to replace Step 4 of DPE. 

\begin{figure}[!h]
    \removelatexerror
    \begin{algorithm}[H]
        \caption{Recursion-based Path Finding (RPF)}
        \KwIn{$n_i$, $n$, and $n_j \in \mathcal{N}$}
        \tcc{Global variables $n_i, n_j, \mathcal{V}, \varrho, \mathcal{G},$ and $\mathbf{\Omega}$ can be visited by RPF. 
            Before calling it, $\mathcal{V}$, $\varrho$, and $\mathbf{\Omega}$ are set as $\varnothing$.}
        \uIf{$n == n_j$}
        {
            $\mathbf{\Omega}.append \big( J(\varrho, n) \big)$ \tcp*[f]{Store the path $J(\varrho, n)$}\\
        }
        \Else
        {
            $\varrho.push(n)$; $\mathcal{V}.add(n)$ \\
            \For{each $n' \in \mathcal{A}(n) -\mathcal{V}$}
            {
                $\texttt{RPF}(n_i, n', n_j)$ \tcp*[f]{Recursive call} \\
            }
            $\varrho.pop()$; $\mathcal{V}.delete(n)$ \\
        }
    \end{algorithm}
\end{figure}

In the following, we calculate the optimal value of $z_\varrho$ for each simple path $\varrho \in \mathbf{\Omega}$. 

\subsubsection{\textbf{Optimal Data Splitting}} \label{3.B.2}
Since both $p(f_i)$ and $p(f_j)$ are fixed, $T^\star \big( p(f_i) \big)$ and $t \big( p(f_i) \big)$ are constants. Therefore, solving $\mathbf{P}_{sub}$ 
is equal to solving the following problem:
\begin{eqnarray}
    \mathbf{P}'_{sub}: & \min_{\forall \varrho \in \mathbf{\Omega}: z_\varrho} 
    \max_{\varrho \in \mathbf{\Omega}} \bigg\{ \Big( \sum_{l \in \varrho} \frac{1}{b_l} \Big) \cdot z_\varrho \bigg\} \nonumber \\
    & s.t. \quad
    \left\{ 
    \begin{array}{c}
        \sum_{\varrho \in \mathbf{\Omega}} z_\varrho = s_{ij},\\
        z_\varrho \geq 0, \forall \varrho \in \mathbf{\Omega}.
    \end{array}
    \right. \qquad \qquad \label{P_sub_cons}
\end{eqnarray}
\eqref{P_sub_cons} is reconstructed from \eqref{cons1} and \eqref{cons2}. To solve $\mathbf{P}'_{sub}$, we define a diagonal matrix 
\begin{equation*}
    \mathbf{A} \triangleq \diag \bigg(
        \sum_{l_1 \in \varrho_1} \frac{1}{b_{l_1}}, 
        \sum_{l_2 \in \varrho_2} \frac{1}{b_{l_2}}, ..., 
        \sum_{l_{|\mathbf{\Omega}|} \in \varrho_{|\mathbf{\Omega}|}} \frac{1}{b_{l_{|\mathbf{\Omega}|}}}
        \bigg).
\end{equation*}
Obviously, all the diagonal elements of $\mathbf{A}$ are positive real numbers. The variables that need to be determined can be written 
as $\mathbf{z} \triangleq [z_{\varrho_{1}}, z_{\varrho_2}, ..., z_{\varrho_{|\mathbf{\Omega}|}}]^\top \in \mathbb{R}^{|\mathbf{\Omega}|}$. Thus, 
$\mathbf{P}'_{sub}$ can be transformed into 
\begin{eqnarray}
    \mathbf{P}_{norm}: & \min_{\mathbf{z}} \Vert \mathbf{A} \mathbf{z} \Vert_{\infty} \qquad \quad \nonumber \\
    s.t. &
    \left\{
    \begin{array}{c}
        \mathbf{1}^\top \mathbf{z} = s_{ij}, \\
        \mathbf{z} \geq \mathbf{0}.
    \end{array}
    \right. \qquad \quad \label{P_map_cons}
\end{eqnarray}
$\mathbf{P}_{norm}$ is an infinity norm minimization problem. By introducing slack variables $\tau \in \mathbb{R}$ and 
$\mathbf{y} \in \mathbb{R}^{|\mathbf{\Omega}|}$, $\mathbf{P}_{norm}$ can be transformed into the following linear programming problem: 
\begin{eqnarray*}
    \mathbf{P}'_{norm}: \min_{\mathbf{z}' \triangleq [\mathbf{z}^\top, \mathbf{y}^\top]^\top} \tau \qquad \\
    s.t. \quad
    \left\{
    \begin{array}{c}
        \sum_{\varrho \in \mathbf{\Omega}} z_\varrho = s_{ij},\\
        \mathbf{A} \mathbf{z} + \mathbf{y} = \tau \cdot \mathbf{1}, \\
        \mathbf{z}' \geq \mathbf{0}.
    \end{array}
    \right. 
\end{eqnarray*}
$\mathbf{P}'_{norm}$ is feasible and its optimal objective value is finite. As a result, simplex method and dual simplex method can be 
applied to obtain the optimal solution efficiently. 

However, simplex methods might be time-consuming when the scale of $\mathcal{G}$ increases. In fact, we can find that the optimal 
objective value of $\mathbf{P}_{norm}$ is 
\begin{equation}
    \min_{\mathbf{z}} \Vert \mathbf{A} \mathbf{z} \Vert_{\infty} = \frac{s_{ij}}{\sum_{k=1}^{|\mathbf{\Omega}|} 1/A_{k,k}},
    \label{optimal_obj}
\end{equation}
if and only if 
\begin{equation}
    A_{u,u} \mathbf{z}^{(u)} = A_{v,v} \mathbf{z}^{(v)}, 1 \leq u \neq v \leq |\mathbf{\Omega}|, 
    \label{optimal_var}
\end{equation}
where $\mathbf{z}^{(u)}$ is the $u$-th component of vector $\mathbf{z}$ and $A_{u,u}$ is the $u$-th diagonal element of $\mathbf{A}$. 
Detailed proof of this result is provided in Sec. \ref{s3.C}. Base on \eqref{optimal_var}, we can infer that the optimal variable 
$\mathbf{z}^\star > \mathbf{0}$, which means that $\forall \varrho \in \mathbf{\Omega}$, $z^\star_\varrho \neq 0$. Therefore, the assumption 
in Sec. \ref{3.B.1} is not violated and the optimal $\mathcal{P}^\star(e_{ij})$ is equivalent to $\mathbf{\Omega}$.

\begin{figure}[!h]
    \removelatexerror
    \begin{algorithm}[H]
        \caption{Optimal Stream Mapping (OSM)}
        \KwIn{$\mathcal{G}$, $(\mathcal{F}', \mathcal{E}')$, and $p(f_j)$}
        \KwOut{The optimal $\mathbf{\Phi}^\star_{ij}$, $p^\star(f_i)$, $\mathcal{P}^\star(e_{ij})$, and $\mathbf{z}^\star$}
        \For(\textbf{in parallel}){each $m \in \mathcal{N}$}
        {
            $p(f_i) \leftarrow m$ \\
            \tcc{Obtain the $m$-th optimal $\mathbf{\Phi}_{ij}$ by \eqref{optimal_obj}}
            $\mathbf{\Phi}_{ij}^{(m)} \leftarrow \frac{s_{ij}}{\sum_{k} 1/A_{k,k}^{(m)}} + T^\star \big( p(f_i) \big) + t \big( p(f_j) \big)$ \\
        }
        $p^\star(f_i) \leftarrow \argmin_{m \in \mathcal{N}} \vec{\Phi}_{ij}^{(m)}$ \\
        $\mathcal{P}^\star(e_{ij}) \leftarrow \mathbf{\Omega}^{(p^\star(f_i))}$ \\
        Calculate $\mathbf{z}^\star$ by \eqref{P_map_cons} and \eqref{optimal_var} with $\mathbf{A} = \mathbf{A}^{(p^\star(f_i))}$ \\
        \Return{$\mathbf{\Phi}^{(p^\star(f_i))}_{ij}$, $p^\star(f_i)$, $\mathcal{P}^\star(e_{ij})$, and $\mathbf{z}^\star$}
    \end{algorithm}
\end{figure}

Up to now, when $p(f_i)$ is fixed, we have calculate the optimal $\mathcal{P}^\star(e_{ij})$ and $z^\star_\varrho$ for all paths 
$\varrho \in \mathcal{P}^\star (e_{ij})$. 
The analysis in this subsection are summarized in \textbf{Algorithm 3}, Optimal Stream Mapping (OSM) algorithm. In OSM, $\mathbf{\Phi}_{ij}^{(m)}$ is 
the $m$-th objective value of $\mathbf{P}_{sub}$ by taking $p(f_i) = m$. Similarly, $\mathbf{\Omega}^{(p^\star (f_i))}$ is the $p^\star (f_i)$-th set of 
simple paths obtained by RPF. For at most $|\mathcal{N}_i|$ choices of $p(f_i)$, OSM calculates the optimal objective value (Step 1 $\sim$ Step 5). 
The procedure is executed in parallel (with different threads) because intercoupling is nonexistent. Then, OSM finds the best placement of $f_i$ and 
returns the corresponding $\mathcal{P}^\star(e_{ij})$, $\mathbf{z}^\star$. OSM is used to replace Step 18 of DPE. 

\begin{figure}[htbp]
    \centerline{\includegraphics[width=3.4in]{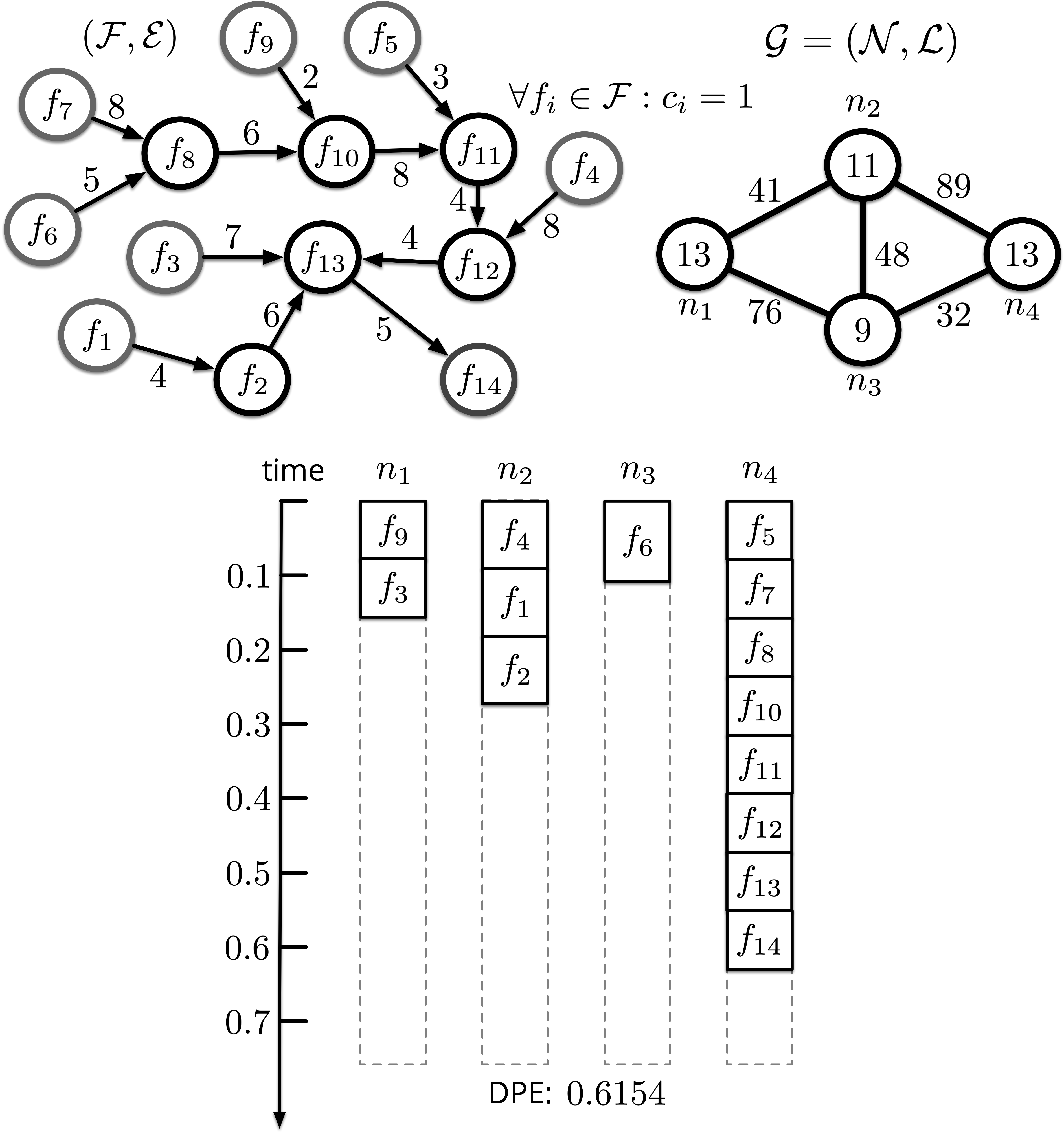}}
    \caption{Embedding of a DAG with the DPE algorithm.}
    \label{fig_add1}
\end{figure}

Fig. \ref{fig_add1} demonstrates an example on how PDE works. The top left portion of the figure is a DAG randomly sampled from the Alibaba 
cluster trace, where all the functions are named in the manner of topological order. $\forall f_i \in \mathcal{F}$, $c_i$ is set as $1$. 
The top right is the edge server cluster $\mathcal{G}$. The bottom demonstrates how the functions are placed and scheduled by DPE. 

\section{Theoretical Analysis}\label{s4}
\subsection{Performance Guarantee}\label{s3.C}
In this subsection, we analyze the optimality of the proposed algorithm, DPE.  
\begin{theorem}
    \textit{\textbf{Optimality of DPE} For a DAG $(\mathcal{F}, \mathcal{E})$ with a given topological order, DPE can achieve the global optimality 
    of $\mathbf{P}$ by replacing the Step 4 of it with RPF and Step 18 of it with OSM.}
\end{theorem}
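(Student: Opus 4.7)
The plan is to decompose the proof into three independent claims whose conjunction yields the theorem: (i) RPF correctly enumerates $\mathcal{P}(e_{ij})$ for every pair of servers; (ii) OSM, given a fixed target placement $p(f_j)$ together with correct $T^\star(p(f_i))$ for every predecessor, returns the true optimum of $\mathbf{P}_{sub}$; and (iii) the outer dynamic program, driven by the topological order and the optimal-substructure identity \eqref{optimal_substructure}, propagates these local optima into a globally optimal makespan. The first claim is essentially bookkeeping, the third is a standard DP induction, and the second carries the main analytical burden.

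For claim (i), I would argue by induction on $|\mathcal{N} \setminus \mathcal{M}|$ using the recursion stated immediately before Algorithm~2: every simple path from $n_i$ to $n_j$ avoiding $\mathcal{M}$ must leave $n_i$ through some neighbour $m \in \mathcal{A}(n_i) \setminus (\mathcal{M} \cup \{n_i\})$ and continue as a simple path from $m$ to $n_j$ that never returns to $n_i$, which is exactly what the recursive call in Step~6 enumerates; termination is guaranteed because $\mathcal{V}$ strictly grows at each level of recursion. For claim (iii), I would induct on the topological position of $f_j$. The base case is Step~16, where every entry function satisfies \eqref{source} by definition. For the inductive step, by the time DPE reaches $f_j$ each predecessor $f_i$ has $T^\star(p(f_i))$ already finalised, so \eqref{optimal_substructure} reduces the computation of $T^\star(p(f_j))$ to one optimal call of $\min_{p(f_i),\mathcal{P}(e_{ij}),z_\varrho} \mathbf{\Phi}_{ij}$ per predecessor followed by the outer $\max$ over predecessors, which is precisely what Steps~11--21 carry out. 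Applying this to the dummy sink $f_{Q+1}$ then shows that the value returned at the end of the algorithm equals the optimum of $\mathbf{P}$.

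The genuine obstacle is claim (ii), which via the reformulation of $\mathbf{P}_{sub}$ into $\mathbf{P}_{norm}$ reduces to justifying \eqref{optimal_obj} and \eqref{optimal_var}. The plan is a short exchange argument: take any feasible $\mathbf{z}$ of $\mathbf{P}_{norm}$ and suppose there exist coordinates $u,v$ with $A_{u,u} z^{(u)} > A_{v,v} z^{(v)}$ and $u$ attaining the maximum of $A_{k,k} z^{(k)}$. Because $A_{v,v}$ is finite and positive, transferring a sufficiently small mass $\varepsilon > 0$ from $z^{(u)}$ to $z^{(v)}$ preserves $\mathbf{1}^\top \mathbf{z} = s_{ij}$ and $\mathbf{z} \geq \mathbf{0}$ while strictly decreasing the infinity norm, contradicting optimality. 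Hence every optimum satisfies \eqref{optimal_var}; solving that system together with $\sum_k z^{\star(k)} = s_{ij}$ yields the closed form \eqref{optimal_obj} and in particular $\mathbf{z}^\star > \mathbf{0}$, so the non-negativity constraint is never binding and the set $\mathbf{\Omega}$ returned by RPF indeed coincides with $\mathcal{P}^\star(e_{ij})$. OSM then sweeps the finitely many choices of $p(f_i) \in \mathcal{N}$ in parallel and returns the minimiser, so its output $\mathbf{\Phi}^\star_{ij}$ is globally optimal for $\mathbf{P}_{sub}$. Combining (i)--(iii) establishes the optimality of DPE.
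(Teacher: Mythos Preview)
Your proposal is correct and, if anything, more complete than the paper's own proof: the paper handles only the optimisation step for $\mathbf{P}_{norm}$ and then appeals in one line to ``the optimality of DP'', whereas you explicitly separate out and justify the correctness of RPF (claim~(i)) and the topological induction (claim~(iii)).

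The genuine methodological difference is in claim~(ii). The paper attacks $\mathbf{P}_{norm}$ through the limit representation $\Vert \mathbf{A}\mathbf{z}\Vert_\infty = \lim_{x\to\infty}\bigl(\sum_k (A_{k,k}\mathbf{z}^{(k)})^x\bigr)^{1/x}$ and applies the AM--GM inequality to this sum, reading off \eqref{optimal_var} as the equality condition and then deducing \eqref{optimal_obj}. Your exchange argument is more elementary: it bypasses the limit and AM--GM machinery entirely and shows directly that any unbalanced feasible allocation can be strictly improved, so the optimum must satisfy \eqref{optimal_var}. One small wrinkle worth patching: if several coordinates tie for the maximum, transferring mass from a single $u$ does not lower $\Vert\mathbf{A}\mathbf{z}\Vert_\infty$; you need to shave an $\varepsilon$ simultaneously from every coordinate in the argmax and deposit it on $v$. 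With that one-line fix your argument is complete, and arguably tighter than the paper's, since the AM--GM route as written only bounds $\Vert\mathbf{A}\mathbf{z}\Vert_\infty$ below by the \emph{$\mathbf{z}$-dependent} geometric mean rather than by a fixed constant, so its optimality conclusion leans implicitly on exactly the balancing observation you make explicit.
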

\begin{proof}
    We firstly prove that OSM solves $\mathbf{P}_{sub}$ optimally. Recall that solving $\mathcal{P}_{sub}$ is equal to solving $\mathbf{P}'_{sub}$, and 
    $\mathbf{P}'_{sub}$ can be transformed into $\mathbf{P}_{norm}$. For $\mathbf{P}_{norm}$, we have
    \begin{equation*}
        \Vert \mathbf{A} \mathbf{z} \Vert_\infty \triangleq \max_k \big\{ |A_{k,k} \mathbf{z}^{(k)}| \big\} = 
        \lim_{x \to \infty} \sqrt[x]{\sum_{k=1}^{|\mathbf{\Omega}|} (A_{k,k} \mathbf{z}^{(k)} )^x}
    \end{equation*}
    because $\forall k, A_{k,k} > 0, \mathbf{z}^{(k)} \geq 0$. 
    According to the \textit{AM–GM inequality}, the following inequality always holds: 
    \begin{equation}
        \frac{\sum_{k=1}^{|\mathbf{\Omega}|} (A_{k,k} \mathbf{z}^{(k)} )^x}{|\mathbf{\Omega}|} \geq 
        \sqrt[|\mathbf{\Omega}|]{\prod_{k=1}^{|\mathbf{\Omega}|}(A_{k,k} \mathbf{z}^{(k)} )^x}, 
        \label{ineq1}
    \end{equation}
    \textit{iff} \eqref{optimal_var} is satisfied. It yields that $\forall x > 0$, 
    \begin{eqnarray}
        \sqrt[x]{\frac{\sum_{k=1}^{|\mathbf{\Omega}|} (A_{k,k} \mathbf{z}^{(k)} )^x}{|\mathbf{\Omega}|}} \geq 
        \sqrt[|\mathbf{\Omega}|]{\prod_{k=1}^{|\mathbf{\Omega}|} A_{k,k} \mathbf{z}^{(k)}}.
        \label{ineq2}
    \end{eqnarray}
    Multiply both sides of \eqref{ineq2} by $\sqrt[x]{|\mathbf{\Omega}|}$, we have 
    \begin{equation}
        \sqrt[x]{\sum_{k=1}^{|\mathbf{\Omega}|} (A_{k,k} \mathbf{z}^{(k)} )^x} \geq 
        \sqrt[x]{|\mathbf{\Omega}|} \cdot \sqrt[|\mathbf{\Omega}|]{\prod_{k=1}^{|\mathbf{\Omega}|} A_{k,k} \mathbf{z}^{(k)}}.
        \label{ineq3}
    \end{equation}
    By taking the limit of \eqref{ineq3}, we have 
    \begin{equation}
        \Vert \mathbf{A} \mathbf{z} \Vert_\infty \geq
        \lim_{x \to \infty} \sqrt[x]{|\mathbf{\Omega}|} \cdot \sqrt[|\mathbf{\Omega}|]{\prod_{k=1}^{|\mathbf{\Omega}|} A_{k,k} \mathbf{z}^{(k)}}.
        \label{ineq4}
    \end{equation}
    Combining with \eqref{P_map_cons} and \eqref{optimal_var}, the right side of \eqref{ineq4} is actually a constant. In other words, 
    \begin{eqnarray}
        \min_{\mathbf{z}} \Vert \mathbf{A} \mathbf{z} \Vert_\infty &=& 
        \lim_{x \to \infty} \sqrt[x]{|\mathbf{\Omega}|} \cdot \sqrt[|\mathbf{\Omega}|]{\prod_{k=1}^{|\mathbf{\Omega}|} A_{k,k} \mathbf{z}^{(k)}} \nonumber \\
        &=& \sqrt[|\mathbf{\Omega}|]{\prod_{k=1}^{|\mathbf{\Omega}|} A_{k,k} \mathbf{z}^{(k)}} \quad \rhd \textrm{with \eqref{P_map_cons}} \nonumber \\
        &=& \frac{s_{ij}}{\sum_{k=1}^{|\mathbf{\Omega}|} 1/A_{k,k}}. \nonumber
    \end{eqnarray}
    The result shows that \eqref{optimal_obj} and \eqref{optimal_var} are the optimal objective value and corresponding optimal 
    condition of $\mathbf{P}_{sub}$, respectively, if the topological ordering is given and regarded as an known variable. 
    The theorem is immediate from the optimality of DP. 
\end{proof}

\subsection{Complexity Analysis}
In this subsection, we analyze the complexity of the proposed algorithms in the worst case, where $\mathcal{G}$ is fully connected.

\subsubsection{\textbf{Complexity of RPF}} 
Let us use $\kappa(i, j)$ to denote the flops required to compute all the simple paths between $n_i$ and $n_j$. 
If $\mathcal{G}$ is fully connected, 
\begin{eqnarray*}
    \kappa(1, N) &=& \big(1 + \kappa(2, N)\big) + \big(1 + \kappa(3, N)\big)\\
    &+& ... + \big(1 + \kappa(N-1, N)\big) + 1\\
    &=& (N - 1) + (N - 2) \cdot \kappa(2, N).
\end{eqnarray*}
To simplify notations, we use $\kappa_i$ to replace $\kappa(i, N)$. We can conclude that $\forall i \in \{1, ..., N-1\}$, 
\begin{eqnarray}
    \kappa_i = (N - i) + (N - i - 1) \cdot \kappa_{i+1}.
    \label{series}
\end{eqnarray}
Based on \eqref{series}, we have 
\begin{eqnarray}
    \kappa_1 &=& (N-2)! \cdot \sum_{i=1}^{N-1} \frac{N-i}{(N-i-1)!} \nonumber \\
    &=& (N-2)! \cdot \sum_{i=1}^{N-1} \frac{i^2}{i!}
    \label{formula}
\end{eqnarray}
which is the maximum flops required to compute all the simple paths between any two edge servers. 
Before calculating the complexity of RPF, we prove two lemmas.

\begin{lemma}
    \textit{$\forall N \geq 7$ and $N \in \mathbb{N}^+$, $N! > N^3 (N+1)$.}
    \label{lemma1}
\end{lemma}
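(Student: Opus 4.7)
The plan is a straightforward induction on $N$, since the inequality has the shape of a factorial versus a polynomial, which grows much slower. Before starting the induction I would quickly observe why $N=7$ is the natural base: at $N=6$ one has $6! = 720$ while $6^3 \cdot 7 = 1512$, so the inequality fails, whereas at $N=7$ we get $7! = 5040$ and $7^3 \cdot 8 = 2744$, which gives the base case.

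For the inductive step, assume $N! > N^3(N+1)$ for some $N \geq 7$. Multiplying both sides by $N+1$ yields
\begin{equation*}
    (N+1)! \;>\; (N+1) \cdot N^3(N+1) \;=\; N^3 (N+1)^2.
\end{equation*}
So it suffices to show $N^3 (N+1)^2 \geq (N+1)^3(N+2)$, which after cancelling $(N+1)^2$ reduces to the purely polynomial inequality $N^3 \geq (N+1)(N+2) = N^2 + 3N + 2$. For $N \geq 7$ this is easy: $N^3 \geq 7 N^2 = N^2 + 6N^2$, and $6N^2 > 3N + 2$ already for $N \geq 1$. This closes the induction.

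There is not much of an obstacle here; the only care point is to use a base case large enough that both the induction anchor and the auxiliary polynomial inequality hold simultaneously, which is why $N=7$ (the threshold stated in the lemma) suffices. The rest is routine arithmetic and I would not grind through the numerical verification beyond checking $7! > 7^3 \cdot 8$.
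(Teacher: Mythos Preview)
Your proof is correct and follows essentially the same approach as the paper: induction with base case $N=7$, multiplying the hypothesis by $N+1$ to obtain $(N+1)! > N^3(N+1)^2$, and then reducing to the polynomial inequality $N^3 \geq (N+1)(N+2)$. The only difference is cosmetic: the paper verifies $q^3 > (q+1)(q+2)$ via a monotonicity argument for $q^3/(q+2)^2$, whereas your direct estimate $N^3 \geq 7N^2 = N^2 + 6N^2 > N^2 + 3N + 2$ is cleaner.
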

\begin{proof}
    The proof is based on induction. When $N=7$, $N! = 5040 > N^3 (N+1) = 2744$. The lemma holds. Assume that the lemma holds for $N=q$, i.e., 
    $q!> q^3 (q+1)$ (induction hypothesis). Then, for $N=q+1$, we have 
    \begin{equation}
        (q+1)! = (q+1) \cdot q! > (q+1)^2 q^3.
        \label{lemma1_1}
    \end{equation}
    Notice that the function $g(q) \triangleq (\frac{1}{q} + \frac{2}{q^2} + \frac{4}{q^3})^{-1}$ monotonically increases when $q \in \mathbb{N}^+ - \{1, 2\}$. 
    Hence $g(q) \geq g(3) = \frac{27}{25} > 1$, and 
    \begin{eqnarray}
        1 &<& \frac{q^3}{(q+2)^2} < \frac{q^3}{(q+1)(q+2)} \nonumber \\
        &\Rightarrow& q^3 > (q+1)(q+2) \nonumber \\
        &\Rightarrow& (q+1) \cdot q^3 \cdot (q+1) > (q+1)^3 (q+2).
        \label{lemma1_2}
    \end{eqnarray}
    Combining \eqref{lemma1_1} with \eqref{lemma1_2}, we have
    \begin{equation*}
        (q+1)! > (q+1)^3 (q+2), 
    \end{equation*}
    which means the lemma holds for $q+1$. 
\end{proof}

\begin{lemma}
    \textit{$\forall N \geq 2$ and $N \in \mathbb{N}^+$, $\sum_{i=1}^{N-1} \frac{i^2}{i!} < 6 - \frac{1}{N}$.}
    \label{lemma2}
\end{lemma}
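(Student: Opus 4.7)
The plan is to prove the lemma by induction on $N$, with Lemma \ref{lemma1} doing the heavy lifting in the inductive step. The key algebraic observation is that, if the claim holds at $N$, then incrementing to $N+1$ adds the single term $N^2/N!$ to the partial sum while tightening the upper bound from $6-1/N$ to $6-1/(N+1)$, so the inductive step reduces to requiring
$$\frac{N^2}{N!} \le \frac{1}{N}-\frac{1}{N+1} = \frac{1}{N(N+1)},$$
which rearranges exactly to $N! \ge N^3(N+1)$. This is precisely the content of Lemma \ref{lemma1}.

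Because Lemma \ref{lemma1} only becomes usable at $N=7$, the first task is to verify the claim directly for each $N \in \{2,3,4,5,6,7\}$. These are quick numerical checks: the partial sums $\sum_{i=1}^{N-1} i^2/i!$ are increasing in $N$, their limit is the finite constant $2e \approx 5.437$, and at $N=7$ the partial sum already evaluates to roughly $5.425$, well under the cutoff $6-1/7 \approx 5.857$; the five smaller cases are looser still.

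For the inductive step proper, I assume the claim at some $N \ge 7$, add $N^2/N!$ to both sides of the induction hypothesis, and invoke Lemma \ref{lemma1} in the form $N^2/N! \le 1/(N(N+1))$ to obtain
$$\sum_{i=1}^{N}\frac{i^2}{i!} < 6 - \frac{1}{N} + \frac{1}{N(N+1)} = 6 - \frac{1}{N+1},$$
which is the claim at $N+1$, closing the induction.

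The only real obstacle is bookkeeping, namely verifying enough base cases so that the induction actually picks up where Lemma \ref{lemma1} becomes applicable; this is purely arithmetic but must be carried out carefully because Lemma \ref{lemma1} offers no information for $N<7$. Once those six cases are checked, the inductive step is a one-line consequence of Lemma \ref{lemma1} combined with the telescoping identity $1/N - 1/(N+1) = 1/(N(N+1))$.
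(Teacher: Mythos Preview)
Your proposal is correct and follows essentially the same approach as the paper: verify the base cases $N\in\{2,\dots,7\}$ directly, then run induction for $N\ge 7$ by adding the single term $N^2/N!$ and invoking Lemma~\ref{lemma1} to bound it by $1/N-1/(N+1)$. The only difference is that you make the telescoping identity $1/N-1/(N+1)=1/(N(N+1))$ explicit, whereas the paper simply says ``by applying Lemma~\ref{lemma1}''; the argument is otherwise identical.
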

\begin{proof}
    We can verify that when $N \in [2, 7] \cap \mathbb{N}^+$, the lamma holds. In the following we prove the lamma holds for $N > 7$ by induction. 
    Assume that the lemma holds for $N = q$, i.e., $\sum_{i=1}^{q-1} \frac{i^2}{i!} < 6 - \frac{1}{q}$ (induction hypothesis). Then, for $N=q+1$, we have 
    \begin{eqnarray}
        \sum_{i=1}^{q} \frac{i^2}{i!} < 6 - \frac{1}{q} + \frac{q^2}{q!}.
    \end{eqnarray}
    By applying Lemma \ref{lemma1}, we get 
    \begin{equation*}
        \sum_{i=1}^{q} \frac{i^2}{i!} < 6 - \frac{1}{q+1}, 
    \end{equation*}
    which means the lemma holds for $q+1$. 
\end{proof}
Based on these lemmas, we can obtain the complexity of RPF, as illustrated in the following theorem:
\begin{theorem}
    \textit{\textbf{Complexity of RPF} In worst case, where $\mathcal{G}$ is a fully connected graph and $N \geq 2$, the complexity of OSM 
    is $O \big( (N-2)! \big)$.}
    \label{theorem2}
\end{theorem}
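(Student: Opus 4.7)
The plan is to derive the bound directly from the closed-form expression for $\kappa_1$ obtained in \eqref{formula} and then apply Lemma \ref{lemma2} as the essential analytic ingredient.

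First, I would recall that $\kappa(1,N) = \kappa_1$ represents the worst-case number of flops that RPF expends to enumerate all simple paths between two designated nodes in a fully connected graph on $N$ vertices, and that the recursion in \eqref{series} unrolls to
\begin{equation*}
    \kappa_1 = (N-2)! \cdot \sum_{i=1}^{N-1} \frac{i^2}{i!}.
\end{equation*}
Since RPF is invoked with a fixed source–destination pair, bounding $\kappa_1$ is exactly what is needed; there is no additional outer factor to worry about.

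Next I would invoke Lemma \ref{lemma2}, which states that $\sum_{i=1}^{N-1} \frac{i^2}{i!} < 6 - \frac{1}{N}$ for every $N \geq 2$. Substituting this bound into the factorization of $\kappa_1$ immediately gives
\begin{equation*}
    \kappa_1 < \left(6 - \tfrac{1}{N}\right)(N-2)! < 6 \cdot (N-2)!,
\end{equation*}
so $\kappa_1 = O\bigl((N-2)!\bigr)$, which is the claimed bound. The constant factor $6$ is independent of $N$, so it is absorbed into the big-$O$ notation cleanly.

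The proof is therefore essentially a one-line consequence of \eqref{formula} and Lemma \ref{lemma2}; the real analytical work lies in establishing Lemma \ref{lemma2} (which in turn leans on Lemma \ref{lemma1}), both of which are already handled. The only subtlety I would flag explicitly is that the sum $\sum_{i=1}^{N-1} i^2/i!$ converges to a finite constant (namely $2e$) as $N \to \infty$, which is the conceptual reason why the $(N-2)!$ prefactor dominates and why no logarithmic or polynomial correction is lost in the big-$O$. I do not anticipate any real obstacle; the statement is a routine corollary once the recursion has been solved and the series bounded.
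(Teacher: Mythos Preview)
Your proposal is correct and follows essentially the same route as the paper: combine the closed form \eqref{formula} with Lemma~\ref{lemma2} to bound $\kappa_1$ by $6\,(N-2)!$. If anything, your version is slightly cleaner, since you apply the finite-$N$ bound $6-\tfrac{1}{N}$ directly rather than phrasing it through a limit.
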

\begin{proof}
    According to Lemma \ref{lemma2}, 
    \begin{equation*}
        \lim_{N \to \infty} \sum_{i=1}^{N-1} \frac{i^2}{i!} < 6.
    \end{equation*}
    Hence $\lim_{N \to \infty} \kappa_1 < 6 (N-2)! = O \big( (N-2)! \big)$. 
\end{proof}
Finding all the simple paths between arbitrary two nodes is a NP-hard problem. To solve it, RPF is based on depth-first search. 
In real-world edge computing scenario for IoT stream processing, $\mathcal{G}$ might not be fully connected. Even though, the number of edge servers 
is small. Thus, the real complexity is much lower. 

\subsubsection{\textbf{Complexity of DPE}}
Notice that RPF is called by OSM $N$ times in parallel. It is easy to verify that the complexity of OSM is $O \big( (N-2)! \big)$ in worst case, too. 
OSM is designed to replaced the Step 6 of DPE. Thus, we have the following theorem: 
\begin{theorem}
    \textit{\textbf{Complexity of DPE} In worst case, where $\mathcal{G}$ is a fully connected graph and $N \geq 2$, the complexity of DPE 
    is 
    \begin{equation*}
        \max \bigg\{ O( N! ), O\Big( |\mathcal{E}| \cdot N \cdot |\mathcal{P}^\star (e_{ij})| \Big)\bigg\}.
    \end{equation*}}
\end{theorem}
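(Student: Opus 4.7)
The plan is to decompose DPE into two independent phases, upper-bound the running time of each, and then observe that the overall cost is the larger of the two.

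Phase one is the path precomputation, namely Steps 2--6, which call RPF once for each ordered pair $(n_i, n_j)$ of distinct nodes in $\mathcal{N}$. There are $N(N-1)$ such pairs, and by Theorem \ref{theorem2} each call costs $O\bigl((N-2)!\bigr)$ in the worst case (fully connected $\mathcal{G}$). I would therefore bound the precomputation by
\begin{equation*}
    N(N-1)\cdot O\bigl((N-2)!\bigr) \;=\; O(N!).
\end{equation*}
This accounts for the first term in the claimed maximum.

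Phase two is the dynamic programming loop in Steps 7--23. I would argue as follows. The outer loop over $j$ traverses $O(Q)$ indices and the middle loop over $n$ contributes a factor of $N$. Within each iteration, the inner loop ranges over predecessors $f_i$ with $e_{ij} \in \mathcal{E}'$; summed over all $j$, this number of triples $(i,j,n)$ is exactly $|\mathcal{E}'|\cdot N = O(|\mathcal{E}|\cdot N)$. For each such triple the algorithm invokes OSM, whose cost I would analyze separately: since Phase one has already materialized every $\mathcal{P}(n_i,n_j,\varnothing)$ and hence every diagonal matrix $\mathbf{A}^{(m)}$, Step 3 of OSM only needs to evaluate the closed-form expression \eqref{optimal_obj}, which is a sum over $|\mathcal{P}^\star(e_{ij})|$ paths. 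The $N$ candidate placements of $f_i$ are processed in parallel (as the algorithm text explicitly notes), so in parallel time each OSM call is $O\bigl(|\mathcal{P}^\star(e_{ij})|\bigr)$. Recovering $\mathbf{z}^\star$ via \eqref{P_map_cons} and \eqref{optimal_var} is linear in $|\mathcal{P}^\star(e_{ij})|$ and is therefore absorbed. Multiplying by the number of triples yields the Phase two bound
\begin{equation*}
    O\Bigl(|\mathcal{E}|\cdot N\cdot |\mathcal{P}^\star(e_{ij})|\Bigr).
\end{equation*}

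The result then follows by taking the maximum of the two phase costs, since DPE executes them sequentially. The main obstacle I anticipate is the bookkeeping around the parallelism claim in OSM: without that assumption the inner OSM cost gains an extra factor of $N$, and I need to argue that the precomputation of all simple paths in Phase one justifies reading $|\mathcal{P}^\star(e_{ij})|$ as the dominant per-call cost instead of $(N-2)!$. A secondary, purely notational issue is interpreting $|\mathcal{P}^\star(e_{ij})|$ as a worst-case quantity over all edges $e_{ij}\in\mathcal{E}'$; I would make this convention explicit at the start of the proof so the $O(\cdot)$ bound is unambiguous.
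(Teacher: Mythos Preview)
Your proposal is correct and mirrors the paper's own argument almost step for step: the paper likewise splits DPE into the path-precomputation phase (bounded as $N(N-1)\cdot O((N-2)!)=O(N!)$ via Theorem~\ref{theorem2}) and the dynamic-programming phase (counting $N\cdot|\mathcal{E}|$ calls to OSM, each dominated by a linear-in-$|\mathcal{P}^\star(e_{ij})|$ step), then takes the maximum. Your counting of triples is slightly cleaner than the paper's ``average predecessor'' phrasing, and the caveats you raise about parallelism in OSM and the interpretation of $|\mathcal{P}^\star(e_{ij})|$ are exactly the points the paper glosses over.
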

\begin{proof}
    From Step 2 to Step 6, DPE calls RPF $N(N-1)$ times. Thus, the complexity of this part (Step 2 $\sim$ Step 6) is $O(N!)$ according to 
    Theorem \ref{theorem2}. The average number of pre-order functions for each \textit{non-source} function 
    $f \in \mathcal{F}' - \mathcal{F}_{src}$ is $\frac{|\mathcal{E}|}{Q-|\mathcal{F}_{src}|}$. As a result, in average, OSM is called 
    $N \times |\mathcal{E}|$ times. In OSM, the step required the most flops is Step 8. If the variable substitution 
    method is adopted, the flops required of this step is $2(|\mathcal{P}^\star(e_{ij})|-1)+1$. 
    Thus, the complexity of Step 7 $\sim$ Step 19 of DPE is $O\Big( |\mathcal{E}| \cdot N \cdot |\mathcal{P}^\star (e_{ij})| \Big)$. 
    The theorem is immediate by combining the two parts. 
\end{proof}
Although $O(N!)$ is of great order of complexity, $N$ is not too large in real-world edgy scenario. Even if it is not, as an offline algorithm, 
it is worth the sacrifice of runtime overhead in pursuit of global optimality.

\section{Experimental Validation}\label{s5}
\subsection{Experiment Setup}
\textbf{IoT stream processing workloads.} The simulation is conducted based on Alibaba's cluster trace of data analysis. This dataset 
contains more than 3 million jobs (called applications in this work), and 20365 jobs with unique DAG information. Considering that there are 
too many DAGs with only single-digit functions, we sampled 2119 DAGs with different size from the dataset. The distribution of the samples 
is visualized in Fig. \ref{fig_exp1}. For each $f \in \mathcal{F}$, the processing power required and output data size are extracted 
from the corresponding job in the dataset and scaled to $[1, 10]$ giga flop and $[5, 15] \times 10^3$ kbits, respectively. 

\begin{figure}[htbp]
    \centerline{\includegraphics[width=2.5in]{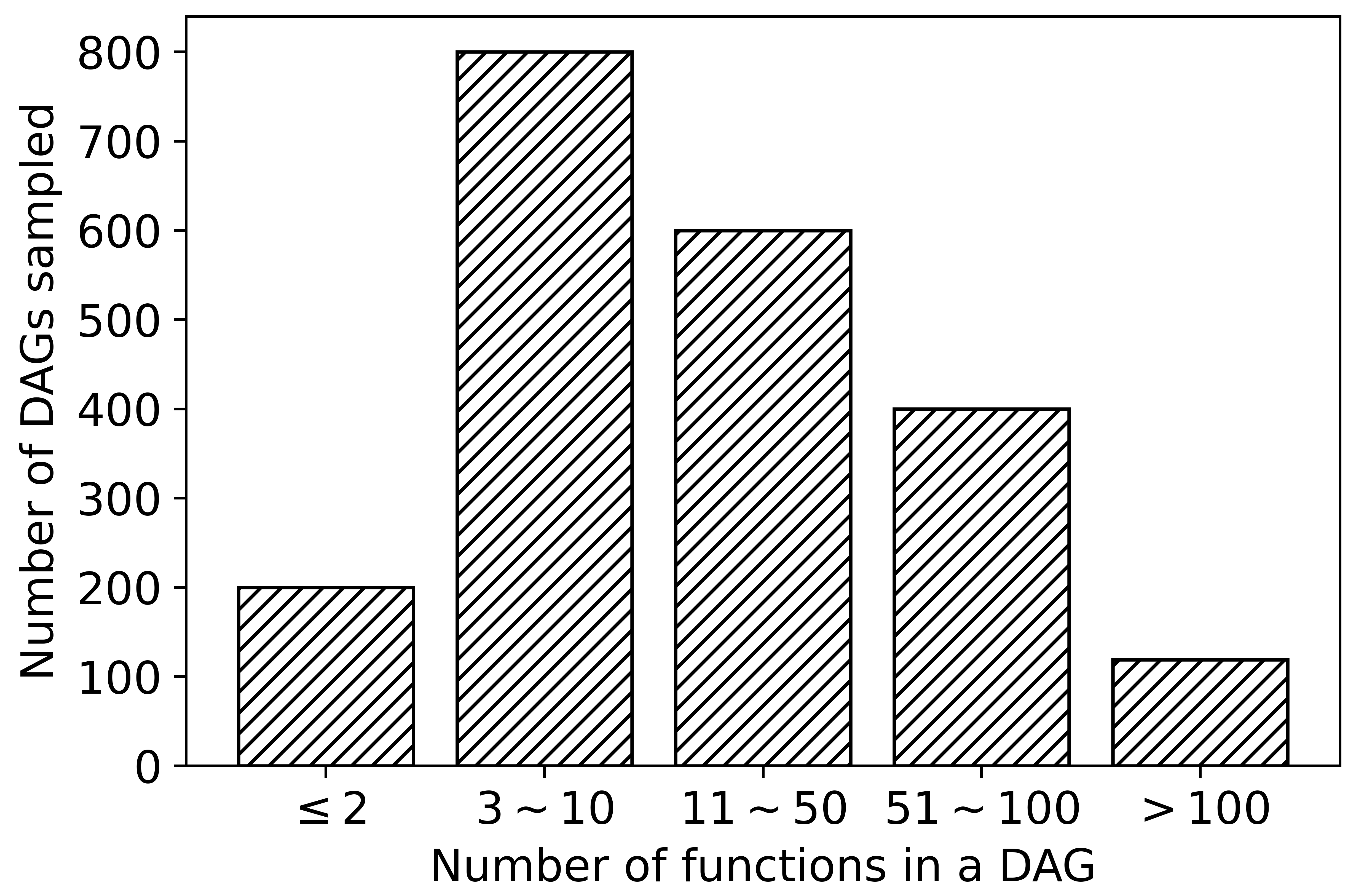}}
    \caption{Data distribution sampled from the cluster trace.}
    \label{fig_exp1}
\end{figure}

\textbf{Heterogenous edge servers.} In our simulation, the number of edge servers is $10$ in default. 
Considering that the edge servers are required to formulate a \textit{connected} graph, the impact of the sparsity of the 
graph is also studied. The processing power of edge servers and the maximum throughput of physical links are uniformly sampled from 
$[20, 40]$ giga flop and $[30, 80] \times 10^3$ kbit/s in default, respectively. 

\textbf{Algorithms compared.} We compare DPE with the following algorithms.
\begin{itemize}
    \item \textit{FixDoc} \cite{placement1}: FixDoc is a function placement and DAG scheduling algorithm with fixed function 
    configuration. FixDoc places each function onto \textit{homogeneous} edge servers optimally to minimize the DAG completion time. 
    Actually, \cite{placement1} also proposes an improved version, GenDoc, with function configuration optimized, too. However, for 
    IoT streaming processing scenario, on-demand function configuration is not applicative. Thus, we only compare DPE with FixDoc. 
    \item \textit{Heterogeneous Earliest-Finish-Time (HEFT)} \cite{HEFT}: HEFT is a heuristic to schedule a set of dependent functions  
    onto heterogenous workers with communication time taken into account. Starting with the highest priority, functions are assigned 
    to different workers to heuristically minimize the overall completion time. HEFT is an algorithm that stands the test of time.
\end{itemize}

\subsection{Experiment Results}
All the experiments are implemented in Python 3.7 on macOS Catalina equipped with 3.1 GHz Quad-Core Intel Core i7 and 16 GB RAM.
\subsubsection{Theoretical Performance Verification}
Fig. \ref{fig_exp2} illustrates the overall performance of the three algorithms. For different data batch, DPE can reduce 
$43.19\%$ and $40.71\%$ of the completion time on average over FixDoc and HEFT on 2119 DAGs. The advantage of DPE is more obvious when the scale of DAG is large 
because the parallelism is fully guaranteed. Fig. \ref{fig_exp3} shows the accumulative distribution of 2119 DAGs' completion time. 
DPE is superior to HEFT and FixDoc on $100\%$ of the DAGs. Specifically, the maximum completion of DAG achieved by DPE is $1.24$s. 
By contrast, only less $90\%$ of DAGs' completion time achieved by HEFT and FixDoc can make it. The results verify the optimality of DPE.

Fig. \ref{fig_exp2} and Fig. \ref{fig_exp3} verify the superiority of proactive stream mapping and data splitting. By spreading data 
streams over multiple links, transmission time is greatly reduced. Besides, the optimal substructure makes sure DPE can find the 
optimal placement of each function simultaneously. 

\begin{figure}[htbp]
    \centerline{\includegraphics[width=2.6in]{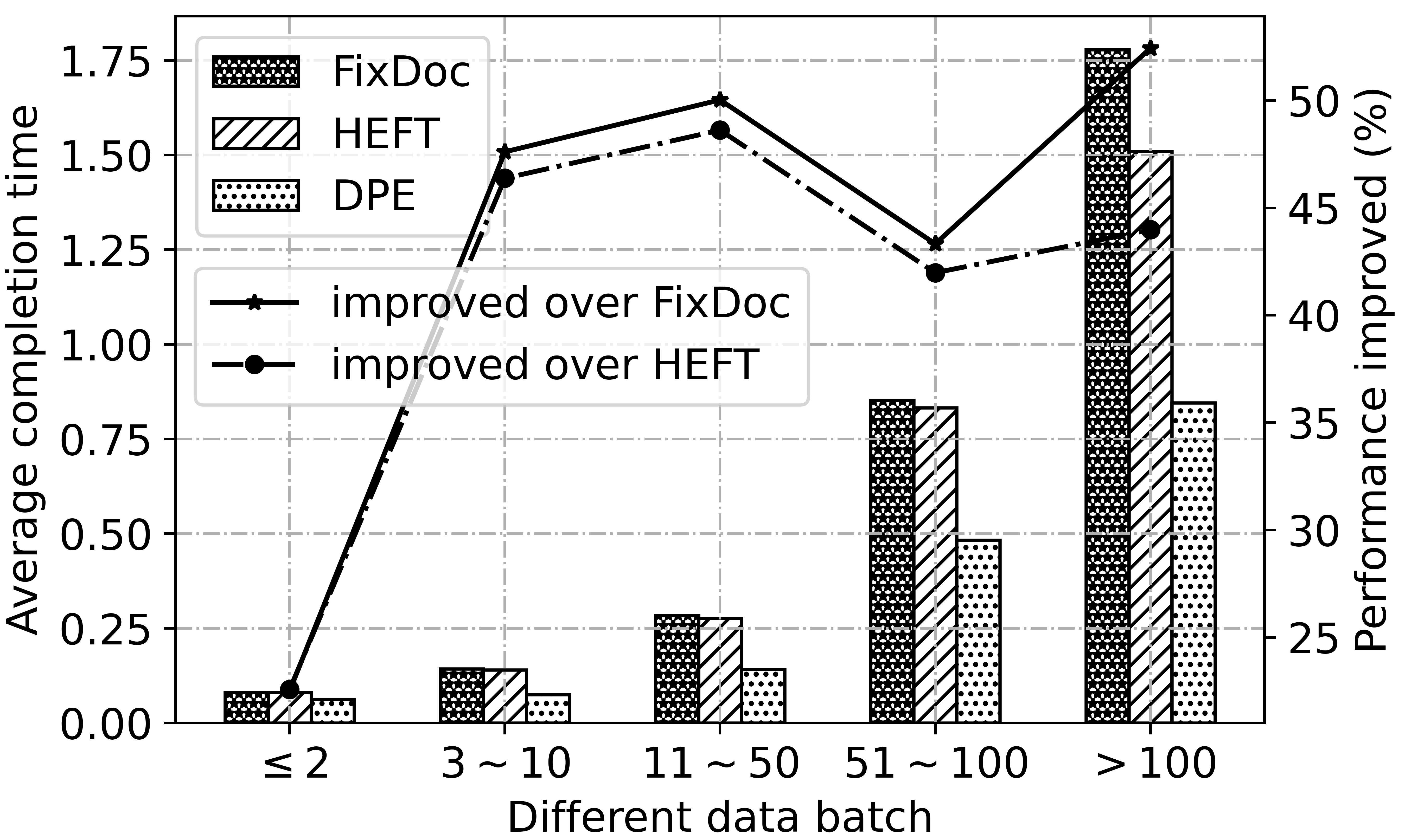}}
    \caption{Average completion time achieved by different algorithms.}
    \label{fig_exp2}
\end{figure}
\begin{figure}[htbp]
    \centerline{\includegraphics[width=2.2in]{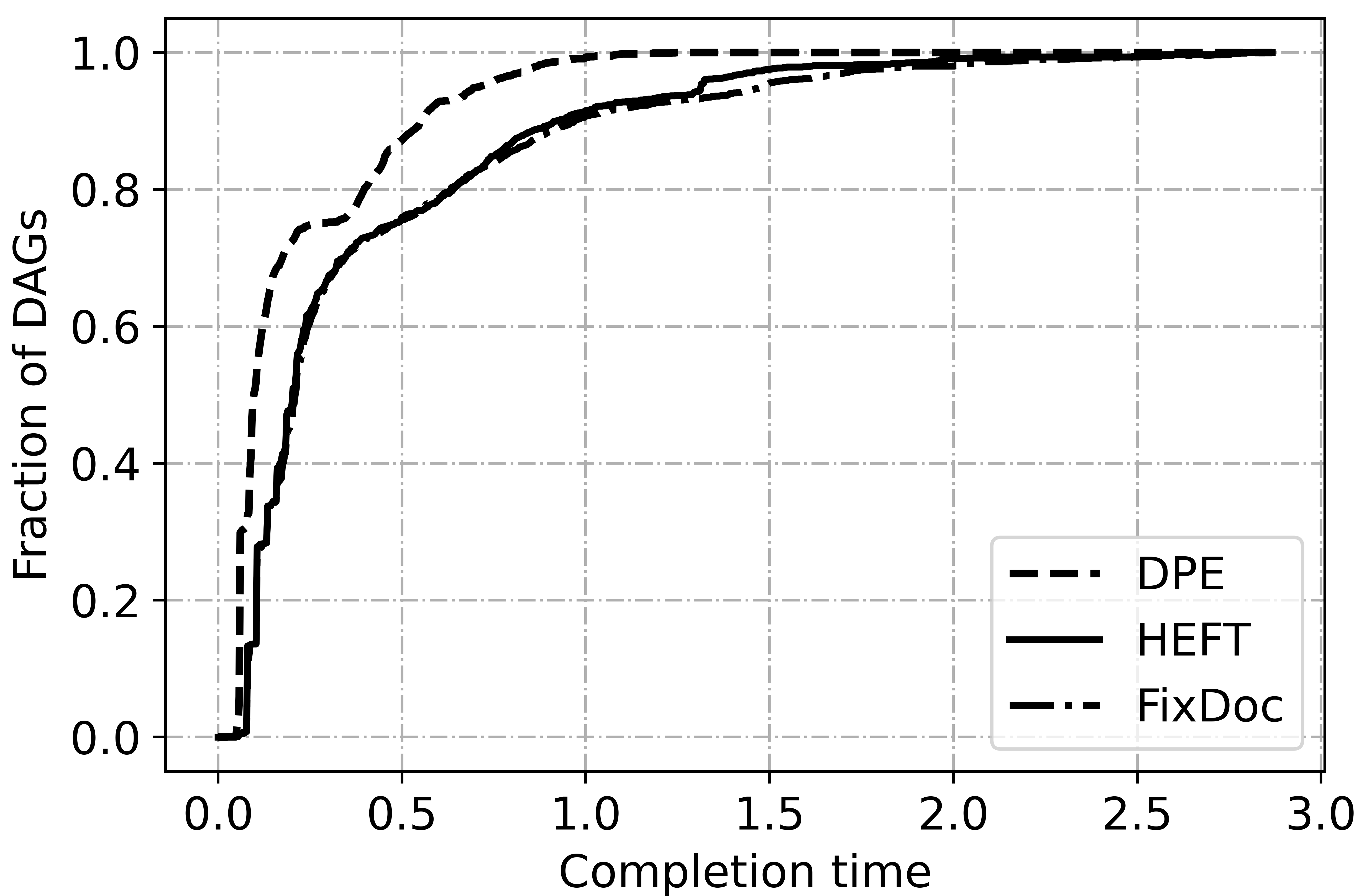}}
    \caption{CDF of completion time.}
    \label{fig_exp3}
\end{figure}

\subsubsection{Scalability Analysis}
Fig. \ref{fig_exp4} and Fig. \ref{fig_exp5} shows the impact of the scale of the heterogenous edge $\mathcal{G}$. In Fig. \ref{fig_exp4}, we 
can find that the average completion time achieved by all algorithms decreases as the edge server increases. The result is obvious 
because more \textit{idle} servers are avaliable, which ensures that more functions can be executed in parallel without waiting. For all data batches, 
DPE achieves the best result. It is interesting to find that the gap between other algorithms and DPE get widened when the scale of $\mathcal{G}$ 
increases. This is because the avaliable simple paths become more and the data transmission time is reduced even further. Fig. \ref{fig_exp4} 
also demonstrates the run time of different algorithms. The results show that DPE has the minimum time overhead. 

\begin{figure}[htbp]
    \centerline{\includegraphics[width=3.1in]{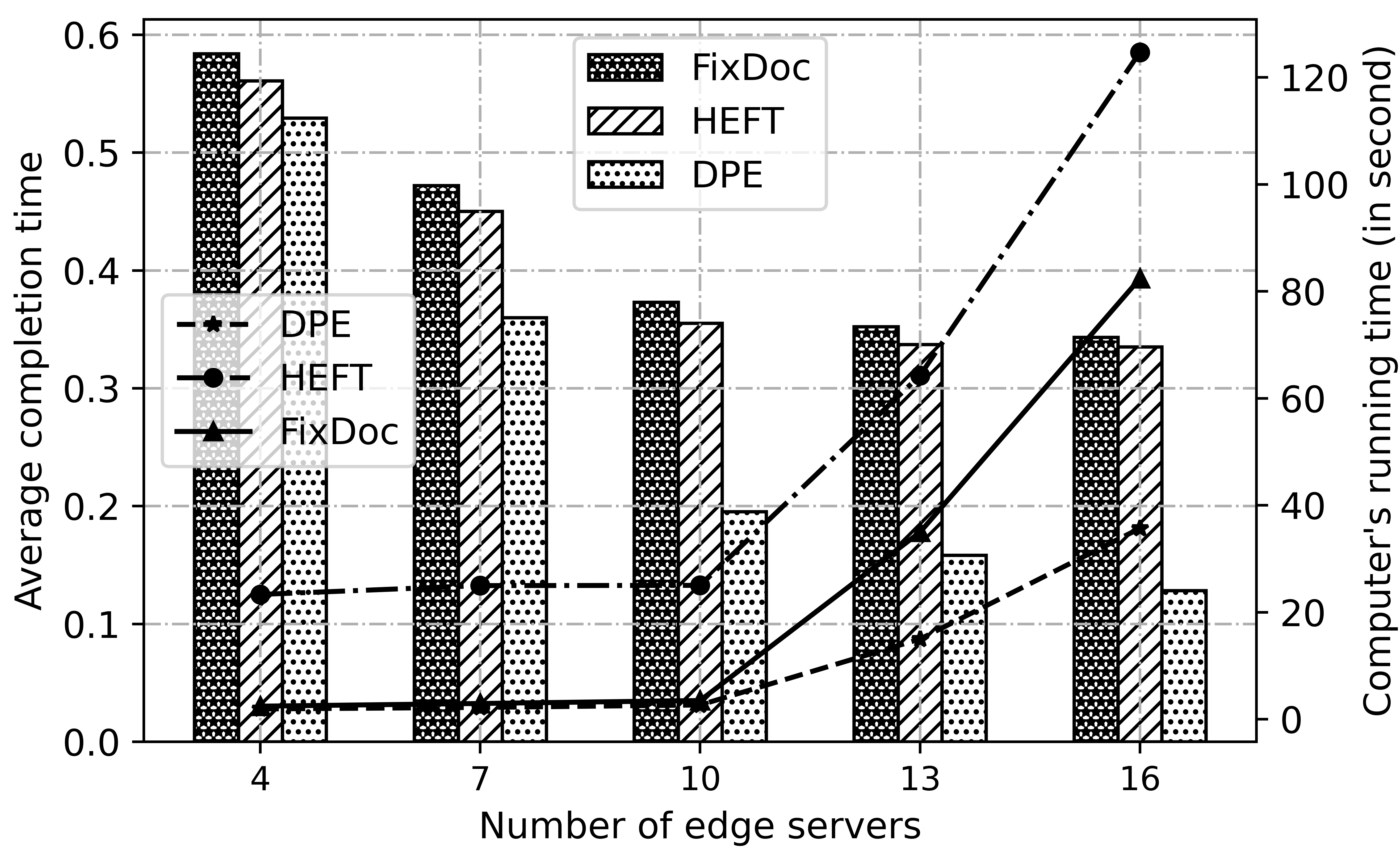}}
    \caption{Average completion time under different number of edge servers.}
    \label{fig_exp4}
\end{figure}

Fig. \ref{fig_exp5} show the impact of sparsity of $\mathcal{G}$. The horizonal axis is the overall number of simple paths $\mathcal{G}$. As 
it increases, $\mathcal{G}$ becomes more dense. because DPE can reduce transmission time with optimal data splitting and mapping, average completion 
time achieved by it decreases pretty evident. By contrast, FixDoc and HEFT have no obvious change. 

\begin{figure}[htbp]
    \centerline{\includegraphics[width=2.7in]{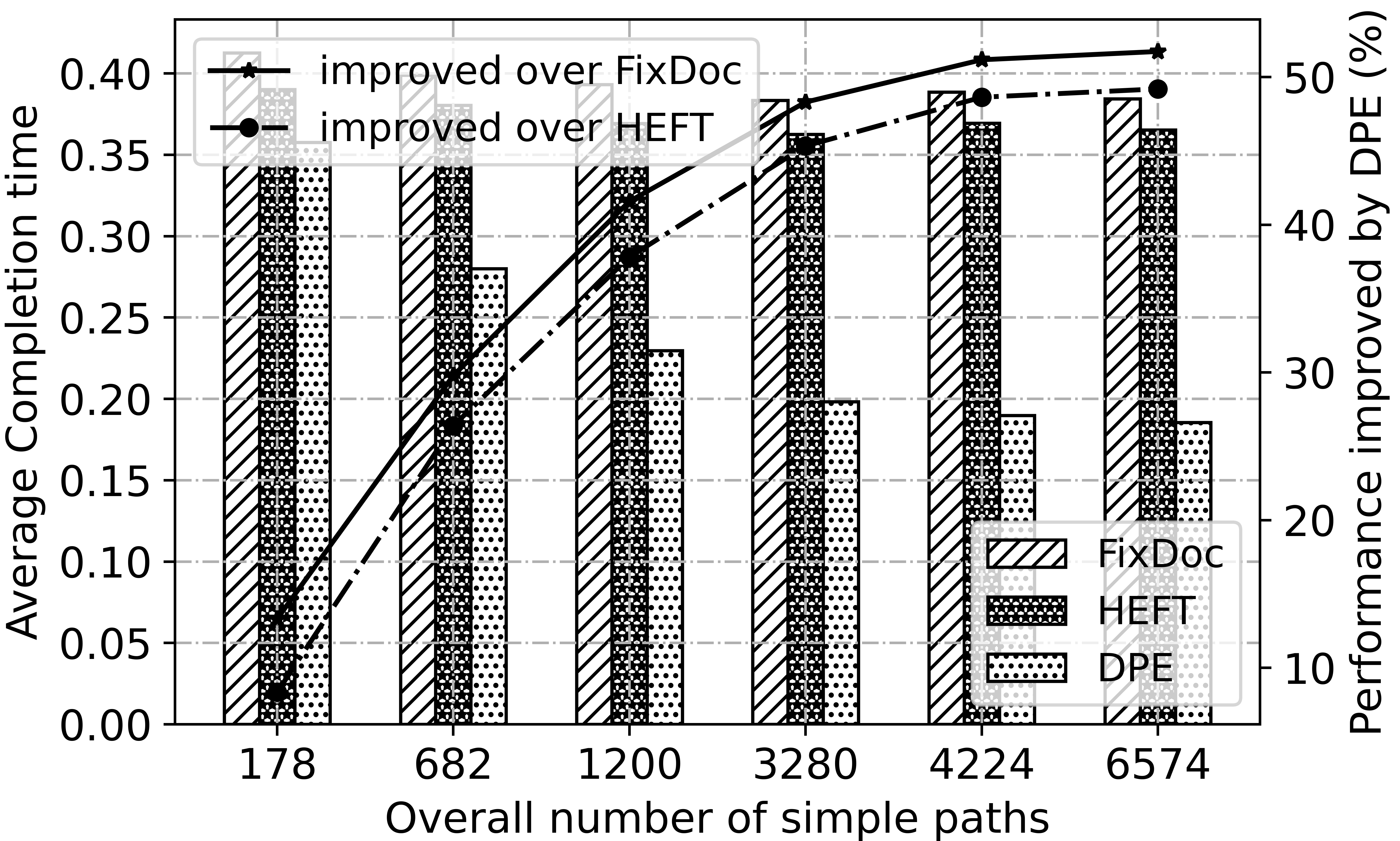}}
    \caption{Average completion time under different sparsity of $\mathcal{G}$.}
    \label{fig_exp5}
\end{figure}

\subsubsection{Sensitivity Analysis}
Fig. \ref{fig_exp6} and Fig. \ref{fig_exp7} demonstrate the impact of system parameters, $\psi_n$ and $b_l$. Notice that $\forall n,l$, 
$\psi_n$ and $b_l$ are sampled from the interval $[\psi_{lower}, \psi_{upper}]$ and $[b_{lower}, b_{upper}]$ uniformly, respectively. 
When the processing power and throughput increase, the computation and transmission time achieved by all algorithms are reduced. 
Even so, DPE outperforms all the other algorithms, which verifies the robustness of DPE adequately. 
\begin{figure}[htbp]
    \centerline{\includegraphics[width=3.4in]{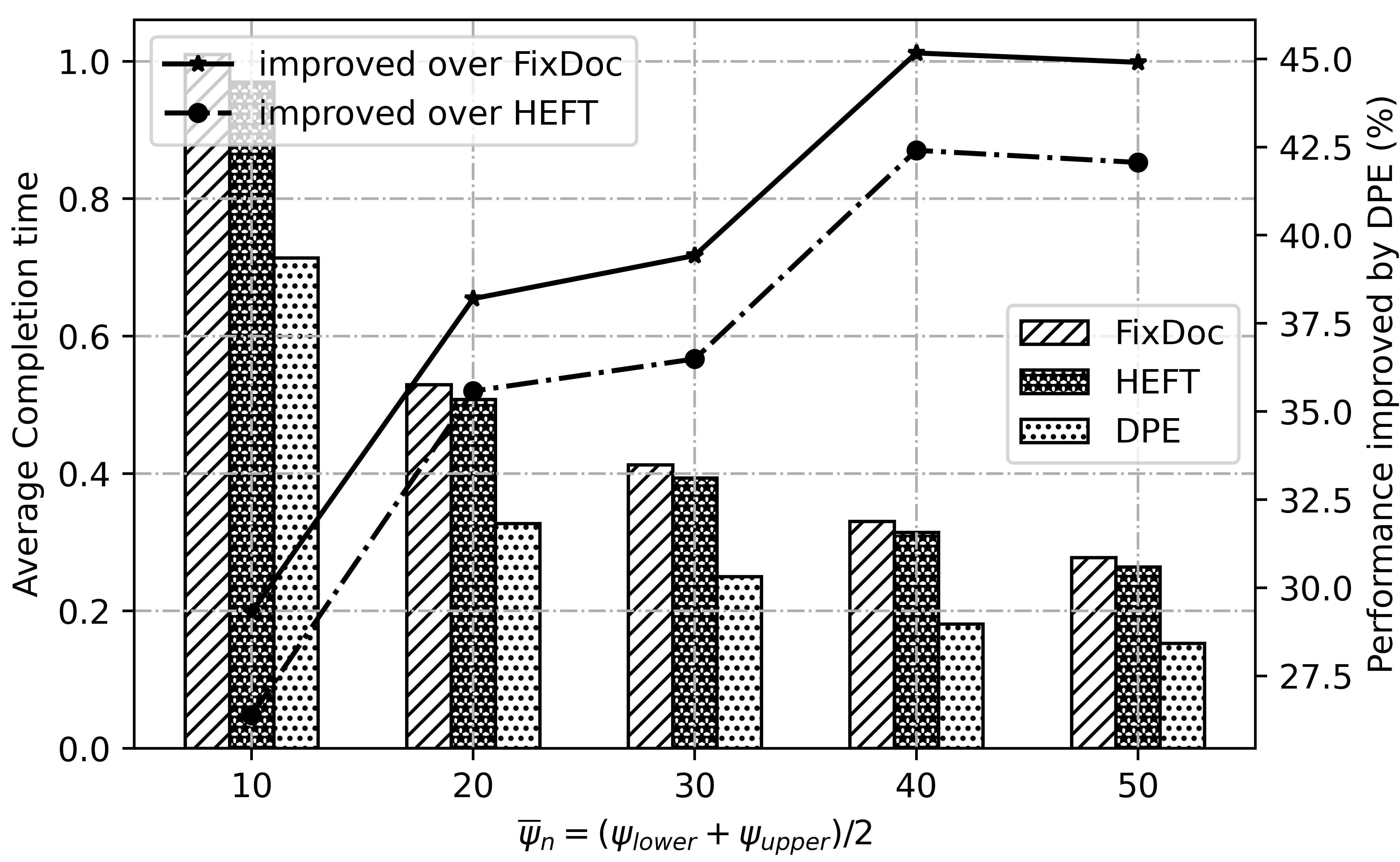}}
    \caption{Average completion time under different processing power of servers.}
    \label{fig_exp6}
\end{figure}
\begin{figure}[htbp]
    \centerline{\includegraphics[width=3.4in]{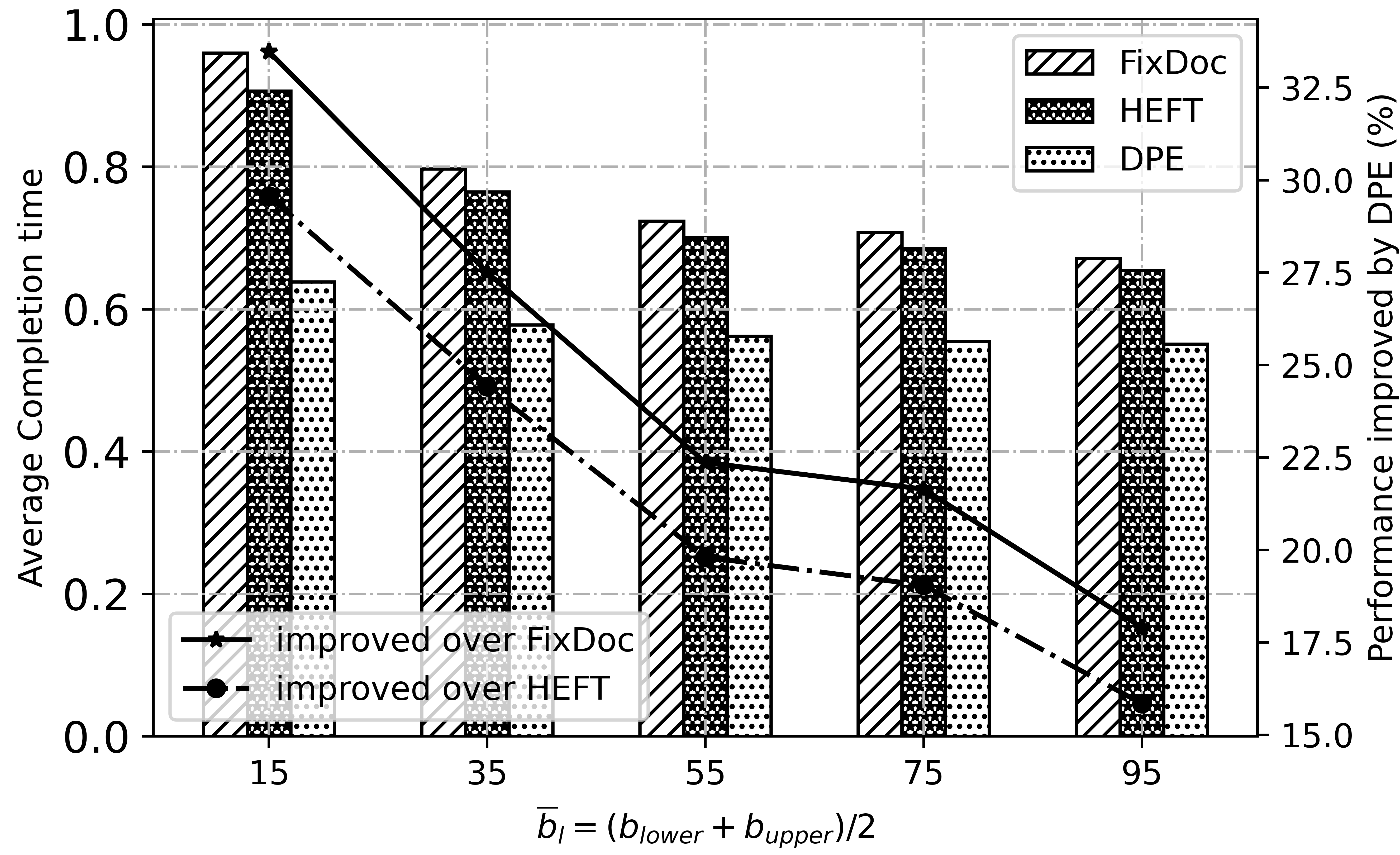}}
    \caption{Average completion time under different throughput of links.}
    \label{fig_exp7}
\end{figure}

\section{Related Works}\label{s6}
In this section, we review related works on function placement and DAG scheduling at the network edge. 

Studying the optimal function placement is not new. Since cloud computing paradigm became popular, it has been extensively studied in the 
literature \cite{scheduling-on-cloud}\cite{add2}. When bringing function placement into the paradigm of edge computing, especially for the IoT stream 
processing, different constraints, such as the response time requirement of latency-critical applications, availability of 
function instances on the heterogenous edge servers, and the wireless and wired network throughput, etc., should be taken into consideration. 
In edge computing, the optimal function placement strategy can be used to maximize the network utility \cite{ns}, minimize the inter-node 
traffic \cite{t-storm}, minimize the makespan of the applications \cite{placement1,placement2,placement3}, or even minimize the 
budget of application service providers \cite{asp}. 

The application is either modeled as an individual blackbox or a DAG with complicated composite patterns. Considering that 
the IoT stream processing applications at the edge usually have interdependent correlations between the fore-and-aft functions, 
dependent function placement problem has a strong correlation with DAG dispatching and scheduling. Scheduling algorithms for edgy computation 
tasks have been extensively studied in recent years \cite{HEFT}\cite{DAG-scheduling2,DAG-scheduling1,add1}. In edge computing, the joint 
optimization of DAG scheduling and function placement is usually NP-hard. As a result, many works can only achieve a near optimal solution 
based on heuristic or greedy policy. For example, Gedeon et al. proposed a heuristic-based solution for function placement across a three-tier 
edge-fog-cloud heterogenous infrastructure \cite{near-opt1}. Cat et al. proposed a greedy algorithm for function placement by estimating the 
response time of paths in a DAG with queue theory \cite{near-opt2}. Although FixDoc \cite{placement1} can achieve the global optimal function 
placement, the completion time can be reduced further by optimizing the stream mapping. 

\section{Conclusion}\label{s7}
This paper studies the optimal dependent function embedding problem. We first point out that proactive stream mapping and data 
splitting could have a strong impact on the makespan of DAGs with several use cases. Based on these observations, we design the 
DPE algorithm, which is theoretically verified to achieve the global optimality for an arbitrary DAG when the topological order 
of functions is given. DPE calls the RPF and the OSM algorithm to obtain the candidate paths and optimal stream mapping, respectively. 
Extensive simulations based on the Alibaba cluster trace dataset verify that our algorithms can reduce the makespan significantly 
compared with state-of-the-art function placement and scheduling methods, i.e., HEFT and FixDoc. The makespan can be further decreased by finding the 
optimal topological ordering and scheduling multiple DAGs simultaneously. We leave these extensions to our future work. 

\bibliographystyle{IEEEtran}
\bibliography{IEEEabrv,ref.bib}

\end{document}